\documentclass{article}
\usepackage{fullpage}
\usepackage{amsmath,amssymb,amsthm}
\usepackage{algorithm}
\usepackage[noend]{algorithmic}
\usepackage{hyperref}
\usepackage{cleveref}

\usepackage{bbm} %

\newcommand{\eps}{\varepsilon}
\newcommand{\R}{\mathbb{R}}
\DeclareMathOperator*{\E}{\mathbb{E}}

\usepackage{mathtools}
\DeclarePairedDelimiter{\norm}{\lVert}{\rVert}
\DeclarePairedDelimiter{\abs}{\lvert}{\rvert}

\DeclareMathOperator{\polylog}{polylog}

\DeclareMathOperator{\rank}{rank}
\DeclareMathOperator{\supp}{supp}
\DeclareMathOperator{\diag}{diag}
\DeclareMathOperator{\tr}{tr}
\newcommand{\one}{\ensuremath{\mathbbm{1}}}

\usepackage{thmtools}
\newtheorem{theorem}{Theorem}[section]
\newtheorem{lemma}[theorem]{Lemma}

\newtheorem{corollary}[theorem]{Corollary}

\newtheorem{proposition}[theorem]{Proposition}
\newtheorem{question}[theorem]{Question}

\newtheorem*{problem*}{Problem}
\newtheorem{remark}[theorem]{Remark}
\newtheorem*{remark*}{Remark}

\usepackage{xcolor}

\title{Online Algorithms for Spectral Hypergraph Sparsification}
\author{
Tasuku Soma \\ Institute of Statistical Mathematics \\ \texttt{soma@ism.ac.jp}
\and 
Kam Chuen Tung \\ University of Waterloo\\ \texttt{kctung@uwaterloo.ca}
\and 
Yuichi Yoshida \\ National Institute of Informatics \\ \texttt{yyoshida@nii.ac.jp} 
}
\begin{document}
\maketitle

\begin{abstract}
We provide the first online algorithm for spectral hypergraph sparsification. 
In the online setting, hyperedges with positive weights are arriving in a stream, and upon the arrival of each hyperedge, we must irrevocably decide whether or not to include it in the sparsifier. 
Our algorithm produces an $(\eps, \delta)$-spectral sparsifier with multiplicative error $\eps$ and additive error $\delta$ that has $O(\eps^{-2} n \log n \log r \log(1 + \eps W/\delta n))$ hyperedges with high probability, where $\eps, \delta \in (0,1)$, $n$ is the number of nodes, $r$ is the rank of the hypergraph, and $W$ is the sum of edge weights. The space complexity of our algorithm is $O(n^2)$, while previous algorithms require the space complexity of $\Omega(m)$, where $m$ is the number of hyperedges.
This provides an exponential improvement in the space complexity since $m$ can be exponential in $n$.
\end{abstract}

\section{Introduction}
\emph{Spectral sparsification} is a cornerstone of modern algorithm design.
The studies of spectral sparsification date back to the seminal work of Spielman~and~Teng~\cite{Spielman2011} for undirected graphs.
Let $G = (V, E, w)$ be an undirected graph with positive edge weight $w : E \to \R_{>0}$.
Let $\eps \in (0,1)$ be an arbitrary constant.
A weighted graph $\tilde G = (V, E, \tilde w)$ on the same node set $V$ is called an \emph{$\eps$-spectral sparsifier} of $G$ if
\[
    (1-\eps) z^\top L_G z \leq z^\top L_{\tilde G} z \leq (1+\eps) z^\top L_G z
\] 
for all $z \in \R^V$, where $\tilde w$ is a nonnegative edge weight, and $L_G$ and $L_{\tilde G}$ denote the Laplacian matrices of $G$ and $\tilde G$, respectively.
The number of nonzeros in $\tilde w$ is called the \emph{size} of a spectral sparsifier $\tilde G$.
Spielman~Teng~\cite{Spielman2011} showed that  one can find an $\eps$-spectral sparsifier with $O(\eps^{-2}n\log n)$ edges in nearly linear time in the size of the input graph.
Since then, there has been a series of works on spectral sparsification of graphs and various applications in the design of fast algorithms; see~\cite{Batson2013,Vishnoi2013} for survey.

Recently, the notion of spectral sparsification was extended to undirected hypergraphs and has been actively studied in the literature~\cite{Soma2019,Bansal2019,Kapralov2021,Kapralov2022,Lee2023,Jambulapati2023}.
For a weighted hypergraph $H = (V, E, w)$ with a positive edge weight $w : E \to \R_{>0}$, the energy function $Q_H: \R^V \rightarrow \R$ of $H$ is given by
\[
  Q_H(z) := \sum_{e \in E} w(e) \max_{u, v \in e} (z(u) - z(v))^2.
\]
This is a generalization of the quadratic form of the Laplacian matrix of a graph.
For $\eps \in (0,1)$, a weighted hypergraph $\tilde H = (V, E, \tilde w)$ on the same node set $V$ is called an $\eps$-spectral sparsifier of $H$ if 
\begin{equation} \label{eqn:eps-delta-sparsifier}
    (1-\eps)Q_{H}(z) \leq Q_{\tilde H}(z) \leq (1+\eps)Q_{H}(z)
\end{equation}
for all $z \in \R^V$.
Again, the number of nonzeros in $\tilde w$ is called the size of a spectral sparsifier $\tilde H$.
Since the number of hyperedges can be exponential, the existence of polynomial-size spectral sparsifiers is nontrivial.
This concept was first introduced by Soma and Yoshida~\cite{Soma2019}, and they showed that there exists an $\eps$-spectral sparsifier with $O(\eps^{-2}n^3)$ hyperedges and it can be found in time polynomial in the size of the input hypergraph.
The current best upper bound on the size of spectral sparsifiers is $O(\eps^{-2}n\log n \log r)$ by~\cite{Lee2023,Jambulapati2023}, where $r$ is the \emph{rank} of the hypergraph, i.e., the maximum size of a hyperedge.

However, all known algorithms for hypergraph spectral sparsification are \emph{offline}, i.e., they first store the entire hypergraph in the working memory and then construct a spectral sparsifier.
This is somewhat unreasonable because the space complexity (e.g., the size of input hypergraphs) could be exponentially larger than the size of the output sparsifier.
So we are naturally led to the following question:
Can we construct a spectral sparsifier of hypergraphs with smaller space complexity?

To formalize this, we study \emph{online} spectral sparsification in this paper.
In the online setting, the hyperedges $e_1, \dots, e_m$ arrive in a stream fashion together with their weights.
When edge $e_i$ arrives, we must decide immediately whether or not to include it in the sparsifier $\tilde{H}$.
Our goal is for $\tilde{H}$ to have a small number of edges and for the algorithm to use little working memory.

\subsection{Our Contribution}
We provide the first algorithm for online spectral hypergraph sparsification.
We say $\tilde H$ is an \emph{$(\eps, \delta)$-spectral sparsifier} of $H$, if
\[
  (1-\eps) Q_{H}(z) - \delta \norm{z}_2^2
  \le 
  Q_{\tilde{H}}(z)
  \le
  (1+\eps) Q_{H}(z) + \delta \norm{z}_2^2
\]
for all $z \in \R^V$.
Our main contribution is the following.

\begin{theorem}[Main]\label{thm:streaming-main}
    There exists an online algorithm (Algorithm~\ref{alg:streaming-i}) with the following performance guarantees:
  \begin{itemize}
      \item The amount of working memory required is $O(n^2)$ assuming the word RAM model;
      \item With high probability (i.e., probability at least $1 - 1/n$), it finds an $(\eps, \delta)$-spectral sparsifier $\tilde H$ of a rank-$r$ hypergraph $H$ with
      \[
        O \left( \frac{n \log n \log r}{\eps^2} \cdot
          \log \left(1 + \frac{\eps W}{\delta n} \right) \right)
      \]
      many hyperedges, where $W = \sum_{e \in E}w(e)$.
  \end{itemize}
\end{theorem}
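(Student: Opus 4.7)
The plan is to combine two ingredients: an offline reduction from hypergraph spectral sparsification to the sparsification of a collection of $O(\log r)$ ordinary graphs, in the spirit of the balanced weight assignments used in~\cite{Jambulapati2023,Lee2023}, and online ridge leverage score sampling for graph Laplacians. The main work is verifying that the reduction can be made \emph{local} to each hyperedge---depending only on its vertex set and weight, not on the other hyperedges or on any probe vector $z$---so that an online sampler can process it at arrival time.

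\textbf{Step 1 (local $O(\log r)$-level graph approximation).} For each hyperedge $e$ of weight $w(e)$ I construct, from $e$ alone, a family of graphs $G_e^{(0)},\dots,G_e^{(L-1)}$ on vertex set $e$ with $L=\lceil\log r\rceil+1$ such that
\[
  Q_H(z)\;\le\;\sum_{e\in E}\sum_{\ell=0}^{L-1} z^\top L_{G_e^{(\ell)}} z \;\le\; O(\log r)\cdot Q_H(z) \qquad\forall z\in\R^V.
\]
Such a construction is implicit in~\cite{Jambulapati2023,Lee2023}. Consequently, producing a spectral sparsifier of each combined graph $G^{(\ell)}:=\bigcup_e G_e^{(\ell)}$ and reassembling after rescaling yields an $(\eps,\delta)$-spectral sparsifier of $Q_H$.

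\textbf{Step 2 (per-level online leverage score sampling).} For each level $\ell$ I initialize $A^{(\ell)}:=\lambda I$ with $\lambda=\Theta(\delta/n)$ and maintain $(A^{(\ell)})^{-1}$ explicitly via Sherman--Morrison updates---this is the source of the $O(n^2)$ working memory. When hyperedge $e_i$ arrives, for each pair $(u,v)$ appearing in $G_{e_i}^{(\ell)}$ with weight $w_{uv}^{(\ell)}$, compute the online ridge leverage score $\tau_{uv}^{(\ell)}:=w_{uv}^{(\ell)}(\chi_u-\chi_v)^\top (A^{(\ell)})^{-1}(\chi_u-\chi_v)$, include the corresponding rank-one term in the sparsifier with probability $p:=\min(1,\,C\eps^{-2}\tau_{uv}^{(\ell)}\log n)$ and weight rescaled by $1/p$, and update $A^{(\ell)}\leftarrow A^{(\ell)}+w_{uv}^{(\ell)}(\chi_u-\chi_v)(\chi_u-\chi_v)^\top$ regardless of the sampling outcome.

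\textbf{Step 3 (analysis and the main obstacle).} A per-level matrix-Freedman argument, standard for online row sampling, shows that with high probability the sampled level-$\ell$ matrix approximates $\sum_e L_{G_e^{(\ell)}}$ up to multiplicative $\eps$ and additive $\lambda I$. The usual telescoping bound $\sum \tau^{(\ell)}_{\cdot}=O\bigl(n\log\bigl(1+\tr(A^{(\ell)}_{\mathrm{final}}-\lambda I)/(n\lambda)\bigr)\bigr)$ gives $O(\eps^{-2} n\log n\log(1+\eps W/(\delta n)))$ samples per level, and summing over the $O(\log r)$ levels produces the target hyperedge count. The main obstacle is compositional: to keep the final multiplicative error at $\eps$ and the final additive error at $\delta\norm{z}_2^2$ after summing $O(\log r)$ level sparsifiers, the per-level error parameters must be tuned to $\Theta(\eps/\log r)$ and the per-level regularizer to $\Theta(\delta/\log r)$, and one must control $\tr(A^{(\ell)}_{\mathrm{final}})$ tightly so that the logarithmic factor in the online leverage score sum is $\log(1+\eps W/(\delta n))$ rather than a larger quantity inflated by $r$ from the per-hyperedge edge counts in the approximating graphs.
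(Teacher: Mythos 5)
There is a genuine gap, and it sits in Step~1. The claimed \emph{local} decomposition of each hyperedge into $O(\log r)$ ordinary graphs cannot exist. Because the construction is said to depend only on the hyperedge $e$ itself, the sandwich must hold hyperedge-by-hyperedge: $Q_e(z)\le \sum_{\ell} z^\top L_{G_e^{(\ell)}}z\le K\,Q_e(z)$ for all $z$, with $K=O(\log r)$. Take $|e|=k$ and let $W:=\sum_{\ell}\sum_{u<v}w^{(\ell)}_{u,v}$ be the total weight across all levels. Applying the lower bound to $z=\chi_v$ gives $\sum_\ell \deg^{(\ell)}(v)\ge 1$, and summing over $v\in e$ yields $2W\ge k$. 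Applying the upper bound to a uniformly random $z\in\{-1,+1\}^e$, for which $Q_e(z)\le 4$ always, gives
\[
2W=\E_z\Bigl[\textstyle\sum_\ell z^\top L_{G_e^{(\ell)}}z\Bigr]\le 4K,
\]
so $W\le 2K$. Together these force $K\ge k/4$, i.e.\ the multiplicative gap in any per-hyperedge graph decomposition is $\Omega(r)$, not $O(\log r)$. I do not think such a decomposition is implicit in the cited works: in both Lee~2023 and Jambulapati--Liu--Sidford~2023 the $\log r$ is earned inside a chaining argument (it comes from the fact that each semi-norm $\phi_i$ in their $\gamma_2$ bound is a max over at most $r^2$ coordinates, cf.\ \autoref{prop:gamma-bound}), not from replacing each hyperedge by $O(\log r)$ graph Laplacians. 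The reweighting used there is a single global weight assignment obtained from a convex program, not a per-hyperedge construction.

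Two downstream consequences also break. First, even granting the sandwich of Step~1, the ratio between $\sum_{e,\ell}z^\top L_{G_e^{(\ell)}}z$ and $Q_H(z)$ is not a fixed constant but varies with $z$ between $1$ and $O(\log r)$; no global rescaling and no tightening of per-level $\eps$ can turn a $(1+\eps)$-approximation of the summed Laplacians into a $(1+\eps)$-approximation of $Q_H$. Second, your output is a set of graph edges, whereas the definition in the paper requires the sparsifier to be a reweighted subhypergraph supported on $E$; Algorithm~\ref{alg:streaming-i} samples whole hyperedges, keeping $Q_{\tilde H}$ in the same (non-quadratic) energy family as $Q_H$.

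The rest of your plan contains sound ingredients that the paper also uses, but deployed differently. The ridge regularizer $\lambda I = \eta I$, the idea of maintaining an $n\times n$ ridged Laplacian (hence $O(n^2)$ memory), and the telescoping log-determinant bound for the sample count are all present in the paper. The key structural differences are: (i) the paper maintains a single auxiliary reweighted clique graph whose per-hyperedge reweighting $c_{i,u,v}$ is chosen by \emph{maximizing} $\log\det(L_{i-1}^\eta+\sum w_i c_{i,u,v}L_{uv})$ -- this is an online, history-dependent convex program, not a local construction; (ii) the hyperedge is kept or dropped as a unit with probability proportional to the maximum ridged effective resistance over pairs in $e_i$; (iii) the concentration is proved by Talagrand generic chaining on $\sup_z|Q_{\tilde H}^\eta(z)-Q_H^\eta(z)|$ (\autoref{lem:exponential-mgf-bound}), not by a per-level matrix-Freedman argument on rank-one updates. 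If you want to pursue a reduction-to-graphs route, you would need either a global (non-local) decomposition that is somehow computable online, or a fundamentally different argument; as written, Step~1 is where the approach fails.
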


\begin{remark}[Lower Bound]
  We remark that the upper bound on the number of hyperedges is tight up to logarithmic factors.
  In fact, in~\cite[Theorem 5.1]{Cohen2020} it is shown that even in the graph case it is necessary to sample $\Omega(n \log(1 + \eps W / (\delta n)) / \eps^2)$ edges.
\end{remark}

We can also obtain an $\eps$-spectral sparsifier for rank-$r$ hypergraphs if the range of edge weights is known in advance.
\begin{corollary}[$\eps$-spectral sparsifier]\label{cor:eps-sparsifier}
    Suppose that $H=(V,E,w)$ is a rank-$r$ hypergraph and that $W_{\min} \leq w(e) \leq W_{\max}$ for every $e \in E$ for some $0 < W_{\min} \leq W_{\max}$.
    Then, Algorithm~\ref{alg:streaming-i} with $\delta = O(\eps W_{\min}^2 n^{-2r})$ finds an $\eps$-spectral sparsifier with 
        \[ 
            O\left(\frac{n r \log n \log r}{\eps^2} \cdot \log\left(\frac{n W_{\max}}{W_{\min}}\right) \right)
        \] 
        hyperedges with high probability.
\end{corollary}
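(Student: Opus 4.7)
The plan is to apply Theorem~\ref{thm:streaming-main} with multiplicative parameter $\eps/2$ and additive parameter $\delta = \Theta(\eps W_{\min}^2 n^{-2r})$, and then upgrade the resulting $(\eps/2,\delta)$-spectral sparsifier $\tilde H$ to a multiplicative-only $\eps$-spectral sparsifier by absorbing the additive error $\delta\norm{z}_2^2$ into the multiplicative term $(\eps/2)\,Q_H(z)$ for every $z$ of interest. The hyperedge count then drops out of substituting this $\delta$ into Theorem~\ref{thm:streaming-main} and simplifying the logarithm.

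For the absorption step, I would first pin down the kernel of $Q_H$: it is the span of the indicator vectors of the connected components of $H$. Since $\tilde H$ retains only hyperedges of $H$, on $\ker Q_H$ we trivially have $Q_{\tilde H}(z) = Q_H(z) = 0$, and both $Q_H$ and $Q_{\tilde H}$ are invariant under adding any element of $\ker Q_H$ to their argument. Hence it suffices to verify the $\eps$-sparsifier inequality on $z \in (\ker Q_H)^\perp$, for which the additive error is absorbed provided the spectral-gap estimate
\[
    Q_H(z) \ge \frac{2\delta}{\eps}\,\norm{z}_2^2 \qquad \forall\, z\in(\ker Q_H)^\perp
\]
holds.

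The main obstacle is establishing this spectral gap, since $Q_H$ is not a quadratic form. My plan is to pass to the weighted clique expansion $G$ of $H$: from $\max_{u,v\in e}(z(u)-z(v))^2 \ge \binom{r}{2}^{-1}\sum_{\{u,v\}\subset e}(z(u)-z(v))^2$, one gets $Q_H(z) \ge z^\top L_G z$, where $L_G$ is a graph Laplacian whose nonzero edge weights are at least $W_{\min}/\binom{r}{2}$ and which satisfies $\ker L_G = \ker Q_H$. A standard component-wise Fiedler-type lower bound then gives $\lambda_2(L_G) = \Omega(W_{\min}/(r^2 n^2))$, hence $Q_H(z) = \Omega(W_{\min}/(r^2 n^2))\,\norm{z}_2^2$ on $(\ker Q_H)^\perp$; this comfortably dominates the target $2\delta/\eps = \Theta(W_{\min}^2 n^{-2r})$, and the conservative choice of $\delta$ costs only an additional logarithmic factor in the final edge count.

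For the hyperedge count, I would substitute $\delta$ into Theorem~\ref{thm:streaming-main}. Using $W \le m\,W_{\max} \le \binom{n}{r}W_{\max} \le n^r W_{\max}$,
\[
    \log\!\left(1 + \frac{\eps W}{\delta n}\right)
    \le \log\!\left(1 + \frac{n^{3r-1}W_{\max}}{W_{\min}^2}\right)
    = O\!\left(r\,\log\!\frac{n W_{\max}}{W_{\min}}\right),
\]
and multiplying by the prefactor $O(n\log n\log r/\eps^2)$ from Theorem~\ref{thm:streaming-main} produces the claimed $O(nr\log n\log r/\eps^2 \cdot \log(nW_{\max}/W_{\min}))$ bound on the number of hyperedges.
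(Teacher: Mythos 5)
Your proof is structurally sound and takes a genuinely different route to the spectral-gap lower bound. Where the paper simply invokes the hypergraph Cheeger inequality of Chan et al.\ to get $Q_H(x)\gtrsim W_{\min}^2 n^{-2r}\norm{x}_2^2$ for $x\perp\one$, you instead pass to the uniform clique expansion $G$ (edge weights at least $W_{\min}/\binom{r}{2}$), observe $Q_H\ge z^\top L_G z$ with $\ker L_G=\ker Q_H$, and use a Fiedler-type algebraic-connectivity bound $\lambda_2(L_G)=\Omega\left(W_{\min}/(r^2n^2)\right)$. You are also more careful than the paper about disconnected hypergraphs: you reduce to $(\ker Q_H)^\perp$ explicitly rather than implicitly assuming $x\perp\one$ is the right orthogonality condition, which matters when $H$ is disconnected. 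Your clique-plus-Fiedler bound is elementary, self-contained, and has the additional merit of scaling correctly (linearly in the weights), whereas the paper's claimed $W_{\min}^2 n^{-2r}$ is quadratic in the weights and so cannot hold as stated under arbitrary rescaling.

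That said, there is a genuine flaw in one sentence: you assert that $\Omega\left(W_{\min}/(r^2n^2)\right)$ ``comfortably dominates the target $2\delta/\eps=\Theta(W_{\min}^2 n^{-2r})$.'' This comparison only holds when $W_{\min}\lesssim n^{2r-2}/r^2$; for large $W_{\min}$ the inequality reverses, and your absorption step fails for the stated choice $\delta=\Theta(\eps W_{\min}^2 n^{-2r})$. The fix is easy and in fact strengthens the result: choose $\delta=O\left(\eps W_{\min}/(r^2n^2)\right)$ to match the spectral gap you actually proved. Then absorption is automatic, and substituting into Theorem~\ref{thm:streaming-main} with $W\le n^r W_{\max}$ gives
\[
  \log\left(1+\frac{\eps W}{\delta n}\right)
  \le \log\left(1+\frac{r^2 n^{r+1} W_{\max}}{W_{\min}}\right)
  =O\left(r\log\frac{nW_{\max}}{W_{\min}}\right),
\]
which recovers the claimed edge count $O\left(nr\log n\log r\,\eps^{-2}\log(nW_{\max}/W_{\min})\right)$ (and in fact the leading factor of $r$ arises only from $\log n^{r}$, so you lose nothing). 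In short: replace the ``comfortably dominates'' sentence by the matching choice of $\delta$, and your alternative to the paper's Cheeger step goes through cleanly.
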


\subsection{Our Techniques}
We outline our algorithms below.
Our starting point is the work of spectral hypergraph sparsification via \emph{generic chaining}~\cite{Lee2023}.
He showed that if we sample each hyperedge with probability proportional to the effective resistance of an auxiliary (ordinary) graph, then the resulting hypergraph is a spectral sparsifier with high probability.
Here, the auxiliary graph is a weighted clique-graph $G = (V, F)$, where $F$ is  the multiset of undirected edges obtained by replacing every hyperedge $e \in E$ with the clique on $V(e)$.
The weight of edge $(u,v)$ coming from hyperedge $e$ is given by $w(e) c_{e,u,v}$, where $c: F \to \R_{\ge 0}$ is a special \emph{reweighting} satisfying the following conditions~\cite{Kapralov2022}:
For each $e \in E$, (i) $\sum_{u, v \in e} c_{e,u,v} = 1$ for $e \in E$ and (ii) $c_{e,u,v} > 0$ implies $r_{u,v} = \max_{u',v' \in e} r_{u',v'}$, where $r_{u,v}$ denotes the effective resistance between $u$ and $v$ in $G$.
Such a reweighting can be found by solving the following convex optimization problem
\begin{alignat*}{2}
    \text{maximize} & \quad \log\det\left( \sum_{e \in E}\sum_{u,v \in e} w(e) c_{e,u,v} L_{uv} + J \right) \\
    \text{subject to} & \quad \sum_{u,v \in e} c_{e,u,v} = 1 \quad (e \in E) \\
    & \quad c_{e,u,v} \geq 0 \quad (e\in E, u,v \in e),
\end{alignat*}
where $L_{uv}$ is the Laplacian of edge $(u,v)$ and $J$ is the all-one matrix~\cite{Lee2023}.
Now we set the edge sampling probability $p_e \propto w(e) \max_{u,v \in e} r_{u,v}$.
Using the generic chaining technique,~\cite{Lee2023} showed that this yields an $\eps$-spectral sparsifier with a constant probability.

In the online setting, the entire hypergraph is not available, so we have to estimate the edge sampling probability on the fly.
Inspired by the online row sampling algorithm~\cite{Cohen2020}, we introduce the \emph{ridged edge sampling probability}.
Let $\eta = \eps/\delta$.
For $i = 1, \dots, m$, we iteratively compute a sequence of auxiliary graphs $G_i$.
Initially, $G_0$ is the empty graph on $V$.
For each $i > 0$, we construct a graph $G_i$ from $G_{i-1}$ by adding an edge $(u,v)$ with weight $w_i c_{i,u,v}$ for every pair of vertices $(u,v)$ in $e_i$, where $w_i$ is the weight of $e_i$ and $c_{i,u,v}$ is an optimal solution of the following convex optimization problem:
\begin{alignat*}{2}
    \text{maximize} & \quad \log\det\left( L_{G_{i-1}} + \sum_{u,v \in e_i} w_i c_{i,u,v} L_{uv} + \eta I \right) \\
    \text{subject to} & \quad \sum_{u,v \in e_i} c_{i,u,v} = 1 \\
    & \quad c_{i,u,v} \geq 0 \quad (u,v \in e_i)
\end{alignat*}
Then, we define the $\eta$-ridged edge sampling probability by $p_i \propto \max_{u,v \in e_i} \norm{(L_{G_i} + \eta I)^{-1/2}(\chi_u - \chi_v)}_2^2$, where $\chi_u$ denotes the $u$-th standard unit vector.
Using the techniques from~\cite{Jambulapati2023b}, we show that this gives an $(\eps, \delta)$-spectral sparsifier having the desired number of hyperedges with high probability.
Since we only need to maintain the Laplacian of $G_i$, the space complexity is $O(n^2)$ as required.
Note that the above convex optimization problem can be solved by projected gradient descent (up to desired accuracy), which only requires space of linear in the dimension, i.e., $O(n^2)$.

\subsection{Related Work}
The literature on spectral sparsification is vast.
We refer the readers to~\cite{Batson2013,Vishnoi2013} for technical details and various applications.
Spectral sparsification of hypergraphs can be used to speed up semi-supervised learning with hypergraph regularizers and hypergraph network analysis; see discussion in~\cite{Soma2019,Kapralov2022} for further applications.

Spectral sparsification for graphs in the \emph{semi-streaming} setting is also well-studied.
This is almost identical to our online setting, but the algorithm can change the weights of edges in the output that have already been sampled.
Kelner~and~Levin~\cite{Kelner2013} initiated this line of research and provided a natural extension of the celebrated effective resistance sampling sparsification algorithm of~\cite{Spielman2011b}.\footnote{As pointed out in~\cite{Cohen2020}, the original analysis has a subtle dependency issue. Later,~\cite{Calandriello2016} provided a complete analysis of their algorithm.}
Cohen~et~al.~\cite{Cohen2020} devised an online row sampling algorithm for general tall and skinny matrices in the online setting, which includes online spectral sparsification of graphs.
Kapralov~et~al.~\cite{Kapralov2017} devised a fully dynamic streaming algorithm for spectral sparsification of graphs, that supports both insertion and deletion.

Beyond undirected hypergraphs, there are several works on spectral sparsification for more complex objects such as directed hypergraphs~\cite{Soma2019,Oko2023}, submodular functions~\cite{Rafiey2022}, and the sum of norms~\cite{Jambulapati2023b}.

\section{Preliminaries}

\subsection{Notations}

We use $\R_{\ge 0}$ and $\R_{> 0}$ to denote the set of nonnegative and positive real numbers, respectively.
Given a positive integer $N$, we use $[N]$ to denote the set $\{1, 2, \dots, N\}$.
All logarithms are natural logarithms unless otherwise specified.
Given a function $f: X \rightarrow \R$, its support $\supp(f)$ is the set $\{x \in X: f(x) \neq 0\}$.
For any (finite) set $X$ and element $u \in X$, the vector $\chi_u \in \R^X$ is the vector whose $u$-entry is $1$ and all other entries are $0$, and the vector $\one_X$ (or simply $\one$) is the vector whose entries are all $1$.
For $x, x' \in \R^X$, $x \perp x'$ denotes $\langle x, x' \rangle = 0$.
For $0 \le p \le 1$, $\mathrm{Ber}(p)$ denotes the random variable that takes value $1$ with probability $p$ and $0$ with probability $1-p$.

\subsection{Linear Algebra}
Let $M \in \R^{k \times k}$ be a symmetric matrix. By the spectral theorem, $M$ has the following decomposition $M = V \Lambda V^\top$, where $V$ is orthogonal and $\Lambda = \diag(\lambda_1, \dots, \lambda_k)$ is the diagonal matrix consisting of the eigenvalues of $M$. It is known that $\Lambda$ is unique up to permutation.
We say that $M$ is \emph{positive definite} (denoted $M \succ 0$) if $\lambda_i > 0$ for all $i$, and
we say that $M$ is \emph{positive semidefinite (PSD)} (denoted $M \succeq 0$) if $\lambda_i \ge 0$ for all $i$.

Given a PSD matrix $M = V \Lambda V^\top$, its pseudoinverse $M^{\dag}$ is $V \Lambda^{\dag} V^\top$ where $\Lambda^{\dag}_{i, i} = \lambda_i^{-1}$ if $\lambda_i > 0$ and $\Lambda^{\dag}_{i, i} = 0$ if $\lambda_i \ge 0$.
Accordingly, $M^{\dag/2} = \sqrt{M^{\dag}} = (\sqrt{M})^{\dag}$ where $\sqrt{\cdot}$ is the usual matrix square root defined on PSD matrices.

\subsection{Hypergraphs}

\textbf{Basic definitions.} A hypergraph $H = (V, E, w)$ is defined with a vertex set $V$, 
a (multi-)set of hyperedges $E = \{e_1, e_2, \dots, e_{|E|}\}$, where each $e_i \subseteq V$ and $|e_i| \ge 2$,
and a weight function $w: e \in E \mapsto w(e) \in \R_{\ge 0}$. We use $w_i$ as a shorthand for $w(e_i)$. 
Unless otherwise specified, $n := |V|$ denotes the number of vertices and $m := |E|$ denotes the number of hyperedges of $H$.
Denote by $r = \rank(H) := \max_{i \in [m]} |e_i|$ the \emph{rank} of $H$, i.e., the size of the largest hyperedge in $H$.
We say that $H$ is \emph{unweighted} if $w_i = 1$ for all $i \in [m]$.\\

\noindent
\textbf{Energy and sparsification.} Given a hyperedge $e \subseteq V$, its energy $Q_e: \R^V \rightarrow \R$ is defined as:
\[
  Q_e(z) := \left[ \max_{u, v \in e} (z(u) - z(v)) \right]^2 =  \max_{u, v \in e} (z(u) - z(v))^2.
\]
The energy $Q_H: \R^V \rightarrow \R$ of a hypergraph $H$ is the weighted sum of its edge energies:
\[
  Q_H(z) := \sum_{i \in [m]} w_i Q_{e_i}(z) = \sum_{i \in [m]} w_i \max_{u, v \in e_i} (z(u) - z(v))^2.
\]

\noindent
Given a hypergraph $H = (V, E, w)$ and error parameters $\eps, \delta \ge 0$ ($\eps$ for relative error, $\delta$ for absolute error), an $(\eps, \delta)$-spectral sparsifier of $H$ is a hypergraph $\tilde{H} = (V, E, \tilde{w})$ that is supported on the hyperedge set $E$ and its energy $Q_{\tilde{H}}$ satisfies
\[
  (1 - \eps) Q_{H}(z) - \delta z^\top z \le Q_{\tilde{H}}(z) \le (1 + \eps) Q_{H}(z) + \delta z^\top z
  \qquad \forall z \in \R^V.
\]
The size of the sparsifier $\tilde{H}$ is simply $|\supp(\tilde{w})|$.
The term ``$\eps$-spectral sparsifier'' refers to $(\eps, 0)$-spectral sparsifiers, i.e., additive error is not allowed.
It corresponds to the well-established notion of spectral sparsifiers introduced in~\cite{Spielman2011}.

\subsection{Reweighting}

A reweighting of a hyperedge $e \subseteq V$ is a set of weights $\{c_{u, v}\}_{u, v \in e}$ such that $c_{u, v} \ge 0$ and $\sum_{u, v \in e} c_{u, v} = 1$.
The corresponding reweighted clique-graph $G$ is the (ordinary) graph on $V$ with edges $(u, v)$ having weight $c_{u, v}$ (and other edges having zero weight).
A reweighting of a hypergraph $H$ is the weighted sum of the reweightings of its hyperedges. The corresponding reweighted clique-graph $G$ is the graph on $V$ with edges $(u, v)$ having weight $w(u, v) = \sum_{i \in [m]: u, v \in e_i} w_i c_{i, u, v}$, where $\{c_{i, u, v}\}_{i \in [m]: u, v \in e_i}$ is a reweighting of the edge $e_i$.
\\
The reason for considering reweighted clique-graphs is that the Laplacian
\[
  [L_G(z)]_u := \sum_{v \in V} w(u, v) (z(u) - z(v))
\]
of an ordinary graph is a linear operator. More importantly, it is PSD.
Its energy $Q_G$ is then
\[
  Q_G(z) := \langle z, L_G(z) \rangle
  =\sum_{u, v \in V} \sum_{i \in [m], u, v \in e_i} w_i c_{i, u, v} (z(u) - z(v))^2
  = \sum_{i \in [m]} w_i \sum_{u, v \in e_i} c_{i, u, v} (z(u) - z(v))^2.
\]

The following property follows from the fact that the energy of a reweighted clique-graph of $H$ is at most that of $H$. 
\begin{proposition}[Energy Comparison]
  \label{prop:energy-comparison}
  Let $H$ be a hypergraph and $G$ be a reweighted clique-graph of $H$. Let $x \in \R^V$ such that $x \perp \one$. Then,
  $Q_H(L_G^{\dagger/2}x) \ge \norm{x}_2^2$.
\end{proposition}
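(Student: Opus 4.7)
The plan is to combine a pointwise inequality $Q_G(z) \le Q_H(z)$ with the standard fact that $L_G^{\dagger/2} L_G L_G^{\dagger/2}$ is the orthogonal projector onto $(\ker L_G)^\perp$, and then evaluate the result on the particular vector $z = L_G^{\dagger/2} x$.

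The first step is to prove $Q_G(z) \le Q_H(z)$ for every $z \in \R^V$. For any hyperedge $e_i$, the family $\{c_{i,u,v}\}_{u,v \in e_i}$ is a convex combination (nonnegative, summing to $1$), so a weighted average is dominated by the maximum:
\[
  \sum_{u,v \in e_i} c_{i,u,v}(z(u)-z(v))^2 \;\le\; \max_{u,v \in e_i}(z(u)-z(v))^2 = Q_{e_i}(z).
\]
Multiplying by $w_i \ge 0$ and summing over $i \in [m]$, the left-hand side is exactly $Q_G(z) = \langle z, L_G z\rangle$ according to the explicit formula for the reweighted clique-graph Laplacian given just before the proposition, while the right-hand side is $Q_H(z)$ by definition.

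The second step is to substitute $z = L_G^{\dagger/2} x$, yielding
\[
  Q_G(L_G^{\dagger/2}x) = (L_G^{\dagger/2}x)^\top L_G (L_G^{\dagger/2}x) = x^\top L_G^{\dagger/2} L_G L_G^{\dagger/2} x.
\]
A one-line spectral computation using $L_G = \sum_{\lambda_i > 0} \lambda_i v_i v_i^\top$ shows that $L_G^{\dagger/2} L_G L_G^{\dagger/2} = \sum_{\lambda_i > 0} v_i v_i^\top$ is the orthogonal projector $\Pi$ onto the image of $L_G$, i.e., onto $(\ker L_G)^\perp$. Since $\one \in \ker L_G$ and $x \perp \one$ lies in the orthogonal complement of $\ker L_G$ (in the connected regime, where $\ker L_G = \mathrm{span}(\one)$), we have $\Pi x = x$ and therefore $Q_G(L_G^{\dagger/2}x) = \norm{x}_2^2$.

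Chaining the two steps gives $Q_H(L_G^{\dagger/2}x) \ge Q_G(L_G^{\dagger/2}x) = \norm{x}_2^2$, which is the desired inequality. The only delicate point I anticipate is the identification $\Pi x = x$, which presumes $G$'s Laplacian to have exactly the one-dimensional kernel $\mathrm{span}(\one)$; this holds whenever the reweighted clique-graph is connected, which is the situation implicitly assumed in the paper's later applications (where Laplacians are either connected or augmented with a ridge $\eta I$), so I do not expect it to be a real obstacle.
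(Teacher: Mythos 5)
Your proof is correct and follows the same route as the paper: both derive the pointwise inequality $Q_H(z) \ge z^\top L_G z$ from the fact that the reweighting is a convex combination, then substitute $z = L_G^{\dagger/2} x$. You make explicit the projector step that the paper's proof leaves silent, and your caveat is well placed: the identity $x^\top L_G^{\dagger/2} L_G L_G^{\dagger/2} x = \norm{x}_2^2$ requires $x \in \mathrm{range}(L_G) = (\ker L_G)^\perp$, so the proposition as literally stated fails if $G$ is disconnected (take $x$ constant on each component yet orthogonal to $\one$; then $L_G^{\dagger/2}x = 0$ while $\norm{x}_2 > 0$). As you observe, the paper only ever invokes a ridged analogue with $L_G^{\eta} = L_G + \eta I$, which is positive definite, so in that setting the kernel issue vanishes and the hypothesis $x \perp \one$ is not even needed.
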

\begin{proof}
By the definition of a reweighting,
\begin{align*}
    Q_H(x) 
    &= \sum_{i=1}^m w_i \max_{u,v \in e}{(x(u) - x(v))^2} \\
    &= \sum_{i=1}^m \sum_{u,v \in e} w_i c_{i,u,v} \max_{u,v \in e}{(x(u) - x(v))^2} \\
    &\geq \sum_{i=1}^m \sum_{u,v \in e} w_i c_{i,u,v} (x(u) - x(v))^2 \\
    &= x^\top L_G x.
\end{align*}
So, $Q_H(L_G^{\dag/2}x) \geq \norm{x}_2^2$.
\end{proof}

\subsection{Generic Chaining}

Our sparsifier $\tilde{H}$ will be an unbiased random sample of $H$.
Since the hypergraph energies $Q_H$ and $Q_{\tilde{H}}$ in \eqref{eqn:eps-delta-sparsifier} are not quadratic forms, i.e. not of the form $z \mapsto z^\top M z$ where $M$ is a matrix, we cannot use matrix concentration inequalities to control the deviation $|Q_{\tilde{H}}(z) - Q_{H}(z)| / Q_H(z)$ at all points $z$. Instead, we prove pointwise concentration and extend it to a uniform bound over the entire domain using Talagrand's generic chaining \cite{Talagrand2014}.
We summarize here certain useful facts about generic chaining.
Let $(V_x)_{x \in X}$ be a real-valued stochastic process and $d$ be a semi-metric on $X$.
We say $(V_x)_{x \in X}$ is a \emph{subgaussian process with respect to $d$} if
\[
    \Pr(V_x - V_y > \eps) \leq \exp\left(-\frac{\eps^2}{2d(x, y)^2}\right)   
\]
for all $x, y \in X$ and $\eps > 0$.
Talagrand's generic chaining relates the supremum of the process $(V_x)$ with the following geometric quantity, called the $\gamma$-functionals:
\[
  \gamma_2(X, d) := \inf_{\mathcal{X} = (X_h)} \sup_{x \in X} \sum_{h \ge 0} 2^{h/2} d(x, X_h),
\]
where the infimum is taken over all collections $(X_h)$ of \textit{admissible sequences}, meaning that $X_h \subseteq X$ and $|X_h| \le 2^{2^h}$ for all $h \ge 0$.
Intuitively, $\mathcal{X}$ is a successively finer net over which the union bound is applied.

The following lemma is key to obtaining a tail bound on the supremum of the process $(V_x)$ and will yield high-probability success guarantees for our streaming algorithm.
\begin{lemma}[{\cite[Theorem 2.2.27]{Talagrand2014}}]
   \label{lem:chaining-whp}
    Let $(V_x)_{x \in X}$ be a subgaussian process on a semi-metric space with respect to a semi-metric $d$
    and let $\Delta(X, d) := \sup_{x, y \in X} d(x, y)$ be the $d$-diameter of $X$.
    If $Z = \sup_{x \in X} |V_x|$, then for any $\lambda > 0$,
    \[
      \log \E[e^{\lambda Z}] \lesssim \lambda^2 \Delta(X, d)^2 + \lambda \gamma_2(X, d).
    \]
\end{lemma}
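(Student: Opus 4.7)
The plan is to follow Talagrand's generic chaining argument: decompose the process along successively finer nets of $X$, bound each increment using the subgaussian assumption, and assemble a uniform tail bound whose two scales match exactly $\gamma_2(X,d)$ (linear-in-$\lambda$ part) and $\Delta(X,d)$ (quadratic-in-$\lambda$ part).

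First, I would fix an admissible sequence $(X_h)_{h \ge 0}$ that is near-optimal for $\gamma_2(X,d)$, in the sense that $\sup_{x \in X} \sum_{h \ge 0} 2^{h/2} d(x, X_h) \le 2 \gamma_2(X, d)$; without loss of generality $X_0 = \{x_0\}$ for some fixed reference point. Define the nearest-point projections $\pi_h : X \to X_h$ by $d(x, \pi_h(x)) = d(x, X_h)$. Assuming $X$ is finite (and passing to a limit at the end), the chaining identity is
\[
  V_x - V_{x_0} \;=\; \sum_{h \ge 1} \bigl(V_{\pi_h(x)} - V_{\pi_{h-1}(x)}\bigr),
\]
and by the triangle inequality the step length satisfies $d(\pi_h(x), \pi_{h-1}(x)) \le 2 d(x, X_{h-1})$.

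Next, I would carry out the level-wise union bound. Each increment $V_{\pi_h(x)} - V_{\pi_{h-1}(x)}$ is subgaussian with parameter at most its step length, and at level $h$ at most $|X_h| \cdot |X_{h-1}| \le 2^{2^{h+1}}$ ordered pairs are realized as $x$ ranges over $X$. The subgaussian tail gives, for any $u > 0$,
\[
  \Pr\!\left[\exists x : V_{\pi_h(x)} - V_{\pi_{h-1}(x)} > u \cdot 2^{h/2} d(\pi_h(x), \pi_{h-1}(x))\right] \;\le\; 2^{2^{h+1}} \exp\!\left(-u^2 2^{h-1}\right).
\]
For $u$ exceeding an absolute constant, summing over $h \ge 1$ bounds the total failure probability by $e^{-c u^2}$. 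On the complementary event, summing the chain yields $\sup_{x}(V_x - V_{x_0}) \lesssim u \cdot \gamma_2(X, d)$ by the choice of admissible sequence. Combining with the one-point subgaussian tail of $|V_x - V_{x_0}|$ for any fixed $x$ (parameter at most $\Delta(X,d)$) to absorb the base-point contribution to $|V_x|$, I would obtain a tail bound of the form $\Pr[Z > C_1 \gamma_2(X, d) + t] \le 2 \exp(-C_2 t^2 / \Delta(X,d)^2)$ valid for all $t > 0$.

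Finally, I would integrate this tail bound against $\lambda e^{\lambda t}$ via $\E[e^{\lambda Z}] = 1 + \lambda \int_0^\infty e^{\lambda t} \Pr(Z > t)\, dt$; standard manipulations then produce $\log \E[e^{\lambda Z}] \lesssim \lambda \gamma_2(X,d) + \lambda^2 \Delta(X,d)^2$, as required. The main obstacle is correctly separating the two scales: $\gamma_2(X,d)$ captures the \emph{typical} size of $Z$ and should appear linearly in $\lambda$, while $\Delta(X,d)$ controls only the residual Gaussian-type fluctuations around that typical size and must appear quadratically in $\lambda$. The delicate balance is that the union-bound cost $2^{2^{h+1}}$ at level $h$ is precisely compensated by the geometric weight $2^{h/2}$ in the definition of $\gamma_2$, so that the chain telescopes to $\gamma_2(X,d)$ rather than to something diameter-dependent, while the $\lambda^2 \Delta^2$ term survives only because increments are controlled by local step lengths instead of the global diameter.
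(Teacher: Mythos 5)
The paper does not prove this lemma; it is cited directly from Talagrand's book, so there is no in-paper proof to compare against. Evaluating your proposal on its own terms: the overall scaffolding (admissible sequence, chaining decomposition, level-wise union bound, integrate the tail) is the right shape, but the core step that produces the $\lambda^2\Delta(X,d)^2$ term is missing, and the argument as written would yield a strictly weaker bound with $\gamma_2(X,d)^2$ in its place.

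Concretely, your level-$h$ threshold is $u\cdot 2^{h/2}\,d(\pi_h(x),\pi_{h-1}(x))$, so on the good event the telescoping sum is controlled by $u\sum_h 2^{h/2}d(x,X_{h-1})\lesssim u\,\gamma_2(X,d)$, and the failure probability is $e^{-cu^2}$ for $u$ above a constant. That is the tail bound $\Pr[Z>Lu\,\gamma_2]\le e^{-cu^2}$, i.e.\ $\Pr[Z>t]\le e^{-c\,t^2/\gamma_2^2}$, whose integration gives $\log\E e^{\lambda Z}\lesssim \lambda\gamma_2+\lambda^2\gamma_2^2$ --- not $\lambda^2\Delta^2$. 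Since $\Delta$ can be much smaller than $\gamma_2$ (in this paper $\gamma_2/\Delta\asymp\sqrt{\log n\log r}$, and the downstream MGF manipulation in Section 4.1 genuinely exploits the smaller $\Delta^2$ coefficient), the two are not interchangeable. Invoking ``the one-point subgaussian tail of $|V_x-V_{x_0}|$'' does not close the gap: that tail controls a single fixed $x$, not the supremum, so it cannot convert the $\gamma_2^2$ in the exponent into $\Delta^2$. The step you'd need --- and the one Talagrand actually uses to get Theorem 2.2.27 via Theorem 2.2.23 --- is a split of the chain at a $u$-dependent level $h_0$ with $2^{h_0}\asymp u^2$: for levels $h>h_0$ take thresholds $K\,2^{h/2}d(\pi_h(x),\pi_{h-1}(x))$ with $K$ a fixed constant (no $u$), which sum to $O(\gamma_2)$ and whose union-bound cost is absorbed by the extra $2^h$ in the exponent; for the single coarse jump from $x_0$ to $\pi_{h_0}(x)$ take threshold $Lu\Delta$, where the step length is bounded by $\Delta$ and the union bound is only over the $\le 2^{2^{h_0}}=e^{O(u^2)}$ points of $X_{h_0}$, so the failure probability is $e^{O(u^2)}e^{-L^2u^2/2}\le e^{-cu^2}$ for large $L$. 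This gives the correct tail $\Pr\left[\sup_x|V_x-V_{x_0}|>L(\gamma_2+u\Delta)\right]\le Le^{-u^2}$, after which your integration step does produce the claimed MGF bound.
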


\subsection{Concentration Inequalities}

We shall use Azuma's inequality to establish the subgaussian bound required for our chaining argument.
\begin{proposition}[Azuma's Inequality]\label{prop:azuma}
  Let $\psi_0 = 0, \psi_1, \dots, \psi_T$ be a martingale. Suppose that
  \[
    |\psi_i - \psi_{i-1}| \le M_i
    \qquad \forall i \in [T].
  \]
  Then,
  \[
    \Pr \left[ |\psi_T| \ge \beta \right] \le 2 \exp \left( \frac{- \beta^2}{2 \sum_i M_i^2} \right).
  \]
\end{proposition}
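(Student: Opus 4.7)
The plan is to prove this by the standard Chernoff/MGF approach: bound the moment generating function of $\psi_T$, apply Markov's inequality to $e^{\lambda \psi_T}$, and optimize $\lambda$. By a union bound applied to both $\psi_T$ and $-\psi_T$ (which is also a martingale starting at $0$ with the same increment bounds), it suffices to prove the one-sided tail bound $\Pr[\psi_T \ge \beta] \le \exp(-\beta^2 / (2 \sum_i M_i^2))$.

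First I would establish \emph{Hoeffding's lemma} as a black box: if $X$ is a real-valued random variable with $\E[X] = 0$ and $|X| \le M$ almost surely, then $\E[e^{\lambda X}] \le \exp(\lambda^2 M^2 / 2)$ for every $\lambda \in \R$. The standard proof writes $X$ as a convex combination of $-M$ and $M$, uses convexity of $t \mapsto e^{\lambda t}$, and simplifies. I would then apply this conditionally to each martingale difference $D_i := \psi_i - \psi_{i-1}$. Let $(\mathcal{F}_i)$ be the natural filtration. Since $\E[D_i \mid \mathcal{F}_{i-1}] = 0$ and $|D_i| \le M_i$, Hoeffding's lemma (applied to the conditional law) gives
\[
  \E\bigl[e^{\lambda D_i} \mid \mathcal{F}_{i-1}\bigr] \le \exp\!\left(\frac{\lambda^2 M_i^2}{2}\right).
\]

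Next, I would iterate the tower property. Writing $e^{\lambda \psi_T} = e^{\lambda \psi_{T-1}} \cdot e^{\lambda D_T}$ and pulling out the $\mathcal{F}_{T-1}$-measurable factor,
\[
  \E[e^{\lambda \psi_T}] = \E\!\left[e^{\lambda \psi_{T-1}} \cdot \E[e^{\lambda D_T} \mid \mathcal{F}_{T-1}]\right] \le \E[e^{\lambda \psi_{T-1}}] \cdot \exp\!\left(\frac{\lambda^2 M_T^2}{2}\right).
\]
Inducting downward (using $\psi_0 = 0$, so $\E[e^{\lambda \psi_0}] = 1$) yields $\E[e^{\lambda \psi_T}] \le \exp(\lambda^2 \sum_{i=1}^T M_i^2 / 2)$. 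Markov's inequality then gives, for $\lambda > 0$,
\[
  \Pr[\psi_T \ge \beta] \le e^{-\lambda \beta}\, \E[e^{\lambda \psi_T}] \le \exp\!\left(\frac{\lambda^2}{2}\sum_{i=1}^T M_i^2 - \lambda \beta\right).
\]
Optimizing by taking $\lambda = \beta / \sum_i M_i^2$ produces the one-sided bound $\exp(-\beta^2 / (2 \sum_i M_i^2))$. Combining with the symmetric bound for $-\psi_T$ via the union bound yields the factor of $2$ in the statement.

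The only non-routine step is Hoeffding's lemma itself, and even there the argument is a clean convexity computation rather than a real obstacle. Everything else is bookkeeping via the tower property, so I do not expect any substantive difficulty; the delicate point to state carefully is that $M_i$ may depend on the history (it is a predictable bound), which is exactly what the conditional application of Hoeffding's lemma allows.
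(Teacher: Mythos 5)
Your proof is correct and is precisely the standard textbook argument (Hoeffding's lemma conditionally, tower property, Chernoff/Markov with optimized $\lambda$, two-sided by symmetry), which is exactly the argument the paper delegates to \cite{CL2006} without reproducing. One small remark: in the statement the $M_i$ are deterministic constants, not predictable random bounds, so your closing caveat about history-dependence is unnecessary here -- though your conditional application of Hoeffding's lemma handles the deterministic case without modification.
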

The proof and the requisite background can be found in standard treatises such as \cite{CL2006}.

The following special case of Chernoff bound is also useful.
\begin{proposition}[Chernoff bound]
  \label{prop:chernoff}
  Let $X_1, \dots, X_T$ be independent Bernoulli random variables where $X_i = \mathrm{Ber}(p_i)$. Let $\mu := \sum_i p_i$. Then, for any $M > 0$,
  \[
   \Pr\left[\sum_i X_i > (\mu + M)\right] \le
   \inf_{\lambda > 0} \exp\left( \mu (e^\lambda - 1 - \lambda) - M \lambda \right).
  \]
\end{proposition}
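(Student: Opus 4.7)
The plan is the standard exponential-moment (Cram\'er--Chernoff) argument, which is completely independent of the hypergraph machinery developed earlier. First I would fix an arbitrary $\lambda > 0$ and apply Markov's inequality to the nonnegative random variable $\exp(\lambda \sum_i X_i)$:
\[
\Pr\!\left[\sum_i X_i > \mu + M\right]
= \Pr\!\left[e^{\lambda \sum_i X_i} > e^{\lambda(\mu + M)}\right]
\le e^{-\lambda(\mu + M)} \, \E\!\left[e^{\lambda \sum_i X_i}\right].
\]
Since the $X_i$ are independent, the moment generating function factorizes as $\E[e^{\lambda \sum_i X_i}] = \prod_i \E[e^{\lambda X_i}]$.

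Next I would compute each factor for $X_i \sim \mathrm{Ber}(p_i)$, giving $\E[e^{\lambda X_i}] = 1 + p_i(e^{\lambda} - 1)$, and upper bound this via the elementary inequality $1 + t \le e^{t}$ to obtain $\E[e^{\lambda X_i}] \le \exp\!\bigl(p_i(e^{\lambda} - 1)\bigr)$. Multiplying over $i$ and using $\mu = \sum_i p_i$ yields
\[
\E\!\left[e^{\lambda \sum_i X_i}\right] \le \exp\!\bigl(\mu (e^{\lambda} - 1)\bigr).
\]
Combining with the Markov step and rearranging the exponent as $\mu(e^{\lambda} - 1) - \lambda(\mu + M) = \mu(e^{\lambda} - 1 - \lambda) - M\lambda$ produces the desired bound for each fixed $\lambda > 0$, and taking the infimum over $\lambda > 0$ on the right-hand side gives the proposition.

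There is no real obstacle here: the only things to be careful about are (i) the direction of Markov (requires $\lambda > 0$ so that the map $t \mapsto e^{\lambda t}$ is increasing), and (ii) the use of $1 + t \le e^t$ with $t = p_i(e^\lambda - 1) \ge 0$, which is valid since $\lambda > 0$. Everything else is routine algebra, and the final infimum over $\lambda$ is left symbolic in the statement, so no optimization needs to be carried out.
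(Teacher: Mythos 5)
Your proof is correct and is the standard Cram\'er--Chernoff argument: Markov's inequality applied to $e^{\lambda \sum_i X_i}$, factorization of the MGF via independence, the Bernoulli bound $\E[e^{\lambda X_i}] = 1 + p_i(e^\lambda - 1) \le e^{p_i(e^\lambda - 1)}$, and optimization over $\lambda > 0$. The paper states this proposition without proof, treating it as a well-known fact, so your derivation simply supplies the omitted argument in the canonical way.
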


\section{Algorithm Description}

In order to obtain an \textit{unbiased} estimator $\tilde{H}$ in this setting, our options are limited. We consider the class of algorithms where the current edge $e_i$ is sampled with probability $p_i$, where $p_i$ depends on all edge arrivals and decisions so far. If sampled, the edge is added to the sparsifier $\tilde{H}$ with weight $w_i / p_i$.\\

Our proposed algorithm, Algorithm~\ref{alg:streaming-i}, has the following features:
\begin{itemize}
    \item The sampling probability \textbf{does not} depend on previous decisions, but only on previous edge arrivals.
    \item It requires maintaining a reweighted graph $G_i$ of the hypergraph $H_i$ at all times, and uses $G_i$ to define sampling probabilities.
\end{itemize}

Using the reweighted graph $G_i$, the algorithm produces \textit{overestimates} of the ``importance'' of the hyperedge $e_i$ in the entire hypergraph $H$, using the effective resistances of the clique edges in the reweighted graph.
By virtue of it being an overestimate, it is relatively easy to analyze the success probability.
The difficulty lies in choosing an appropriate reweighting, so that the number of selected hyperedges remains well-controlled. This is why it is helpful to use the log-determinant potential function to guide the search for a suitable reweighting.

\begin{algorithm}[ht]\caption{Online Hypergraph Sparsification}\label{alg:streaming-i}
    \textbf{Input:}
    Hypergraph $H = (V, E)$ given as a stream, $\eps, \delta > 0$. Let $\eta := \delta / \eps$. \\
    \textbf{Initialization:} 
    let $G_0$ be the empty graph on $V$, $L_0^{\eta} := \eta I_n + L_{G_0} = \eta I_n$.\\
    \textbf{For} $i=1, \dots, m$
    \begin{enumerate}
        \item Edge $e_i$ arrives with weight $w_i$.
        \item Compute a reweighting $c_{i, u, v}$ of edge $e_i$, so that
        \[
          \log \det \left( L_{i-1}^{\eta} + \sum_{u, v \in e_i} w_i c_{i, u, v} L_{uv} \right)
        \]
        is maximized.
        \item Let $G_i$ be the graph obtained from $G_{i-1} $ by adding an edge $(u,v)$ with weight $ w_i c_{i, u, v}$ for every pair of vertices $(u,v)$ in $e_i$, and let $L_i^{\eta} := \eta I_n + L_{G_i} = L_{i-1}^{\eta} + \sum_{u, v \in e_i} w_i c_{i, u, v} L_{uv}$.
        \item Let $r_i := \max_{u, v \in e_i} \norm{(L_i^{\eta})^{-1/2} (\chi_u - \chi_v)}_2^2$ be the maximum ridged effective resistance across a pair of vertices in $e_i$.
        \item Sample edge $e_i$ with probability $p_i := \min(1, c r_i w_i)$, where $c = O(\eps^{-2}\log n \log r )$.
    \end{enumerate}
\end{algorithm}

\section{Analyzing the Success Probability}\label{sec:analysis-p-success}
In this section, we prove that Algorithm~\ref{alg:streaming-i} succeeds with high probability.
Given a hypergraph $H$, let
\[
  Q_H^{\eta}(z) := Q_H(z) + \eta z^\top z 
\]
be the $\eta$-ridged energy of $z$.
We would like to control the probability that $Q_{\tilde{H}}^{\eta}(z)$ is within a multiplicative factor of $1 \pm \eps$ from $Q_{H}^{\eta}(z)$ for all $z \in \R^V$.
Note that the event
\[
  \sup_{z: Q_H^{\eta}(z) \le 1} |Q_{\tilde{H}}^{\eta}(z) - Q_H^{\eta}(z)| \le \eps
\]
is the same as
\[
  (1 - \eps) Q_H^{\eta}(z) \le Q_{\tilde{H}}^{\eta}(z) 
  \le (1 + \eps) Q_H^{\eta}(z)
  \qquad \forall z \in \R^V,
\]
which by the choice of $\eta$ implies that
\[
  (1-\eps) Q_H(z) - \delta z^\top z
  \le Q_{\tilde{H}}(z)
  \le (1 + \eps) Q_H(z) + \delta z^\top z
  \qquad \forall z \in \R^V,
\]
i.e., that $\tilde{H}$ is an $(\eps, \delta)$-spectral sparsifier of $H$.\\

Our plan is as follows.
For the desired concentration bound, we will bound the exponential moment generating function (MGF)
$\E_{\tilde{H}} [\exp(\lambda \cdot \sup_{z: Q_H^{\eta}(z) \le 1} |Q_{\tilde{H}}^{\eta}(z) - Q_H^{\eta}(z)|)]$ of the energy discrepancy function.
By Markov's inequality and a suitable choice of $\lambda$, we can then conclude that $Q_H^{\eta}$ and $Q_{\tilde{H}}^{\eta}$ are $\eps$-close with high probability.

The following is the main technical result of the section:
\begin{lemma}[Exponential MGF Bound]\label{lem:exponential-mgf-bound}
    Let $H$ be a hypergraph stream and $\tilde{H}$ be the sampled hypergraph obtained from Algorithm~\ref{alg:streaming-i}. Let $Z := \sup_{z: Q_H^{\eta}(z) \le 1} |Q_{\tilde{H}}^{\eta}(z) - Q_H^{\eta}(z)|$.
    Then, for any $\lambda > 0$,
    \[
      \E_{\tilde{H}} \left[
      \exp (\lambda Z) \right]
      \le
      \E_{\tilde{H}} \left[
        \exp\left(
          \frac{\lambda^2}{c} (1 + Z)
          +
          \lambda \sqrt{\frac{\log n \log r}{c}} (1 + Z)^{1/2}
        \right)
      \right].
    \]
\end{lemma}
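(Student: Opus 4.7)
The plan is to apply Talagrand's chaining inequality (Lemma~\ref{lem:chaining-whp}) to the process $(V_z)_{z \in X}$ defined by
\[
V_z := Q_{\tilde H}^\eta(z) - Q_H^\eta(z) = \sum_{i=1}^m \frac{w_i}{p_i}(X_i - p_i)\,Q_{e_i}(z),
\]
where $X := \{z \in \R^V : Q_H^\eta(z) \le 1\}$ and $X_i := \mathbbm{1}[e_i \in \tilde H]$, so that $Z = \sup_{z \in X} |V_z|$ matches the quantity in the statement.

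First I would establish subgaussianity. Since each $p_i$ depends only on the stream arrivals (not on prior sampling decisions), the prefix sums $\sum_{i \le t} \frac{w_i}{p_i}(X_i - p_i)[Q_{e_i}(z) - Q_{e_i}(z')]$ form a martingale with step sizes bounded by $M_i(z, z') := (w_i / p_i)\,|Q_{e_i}(z) - Q_{e_i}(z')|$ (using $|X_i - p_i| \le 1$). Azuma's inequality (Proposition~\ref{prop:azuma}) then shows that $(V_z)$ is subgaussian with respect to the semi-metric $d(z, z')^2 := \sum_i M_i(z, z')^2$.

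Next I would bound the diameter $\Delta(X, d)^2$ and the $\gamma_2$-functional $\gamma_2(X, d)$. Three key ingredients enter here: (a) the sampling overestimate $w_i / p_i \le 1/(c r_i)$; (b) the Cauchy--Schwarz bound $Q_{e_i}(z) \le r_i \cdot z^\top L_i^\eta z$, immediate from the definition of $r_i$; and (c) the Energy Comparison inequality (Proposition~\ref{prop:energy-comparison}), which controls the reweighted-graph energies $z^\top L_i^\eta z$ in terms of $Q_H^\eta(z)$. Combining these with the subadditivity $|Q_{e_i}(z) - Q_{e_i}(z')| \le Q_{e_i}(z) + Q_{e_i}(z')$, a routine computation gives $\Delta(X, d)^2 \lesssim (1 + Z)/c$, the $(1+Z)$ factor arising when estimating cross-terms via $Q_{\tilde H}^\eta(z) \le Q_H^\eta(z) + Z \le 1 + Z$. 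For $\gamma_2(X, d)$ I would construct admissible sequences following the generic chaining framework for hypergraph sparsification by Lee~\cite{Lee2023} and Jambulapati et al.~\cite{Jambulapati2023}, using two nested chainings: an outer graph chaining over the spectral structure of the Laplacians $L_i^\eta$ (exploiting the log-det optimality of the reweightings $c_{i, u, v}$ from Step~2 of Algorithm~\ref{alg:streaming-i} to yield the $\log n$ factor), and an inner chaining over the $\binom{|e_i|}{2}$ pairs in each hyperedge to handle the $\max$ operator (yielding the $\log r$ factor). This gives $\gamma_2(X, d) \lesssim \sqrt{(\log n \log r) / c}\cdot (1 + Z)^{1/2}$.

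Plugging both estimates into Lemma~\ref{lem:chaining-whp} and exponentiating yields the claimed MGF inequality, with $(1+Z)$ appearing on the right-hand side inside the expectation because the metric bounds implicitly involve the random $\tilde H$ through $Z$ itself. The main obstacle is the $\gamma_2$ estimate: constructing an admissible sequence that simultaneously respects the incremental structure of the $L_i^\eta$ matrices (and leverages the log-det optimality of the reweighting) and the inner refinement over the $(u, v) \in e_i$ is the most technical step. A secondary subtlety is carrying the $(1+Z)$ dependence cleanly through both chainings, which forces the conclusion to take the ``self-bounding'' MGF form rather than a cleaner deterministic bound---but this form is in fact exactly what is needed downstream to close the concentration argument by iterating in $\lambda$.
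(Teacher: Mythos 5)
Your decomposition $V_z = Q_{\tilde H}^\eta(z) - Q_H^\eta(z) = \sum_i \frac{w_i}{p_i}(X_i - p_i)Q_{e_i}(z)$ and the observation that the prefix sums form a martingale (since $p_i$ depends only on arrivals, not prior sampling) are both correct. However, the proof as sketched has a genuine gap that breaks the diameter and $\gamma_2$ estimates.

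The gap is that you skip the symmetrization step. You apply Azuma's inequality with $|X_i - p_i| \le 1$, which makes the Azuma increments $M_i(z,z') = (w_i/p_i)|Q_{e_i}(z) - Q_{e_i}(z')|$ \emph{deterministic}, hence $d(z,z')^2 = \sum_i M_i(z,z')^2$ is a deterministic semi-metric. With this metric there is no way for $(1+Z)$ to appear in $\Delta(X,d)$ or $\gamma_2(X,d)$: trying your chain of bounds, each term $(w_i/p_i) Q_{e_i}(z) \le \frac{1}{c r_i}\cdot r_i\, z^\top L_m^\eta z \le 1/c$ on the set $\{Q_H^\eta(z)\le 1\}$ (using Proposition~\ref{prop:energy-comparison}), but the sum $\sum_i (w_i/p_i) Q_{e_i}(z)$ then only yields $\lesssim m/c$, not $1+Z$. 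Your claimed step ``the $(1+Z)$ factor arising when estimating cross-terms via $Q_{\tilde H}^\eta(z) \le Q_H^\eta(z) + Z$'' requires the sum $\sum_i (w_i/p_i)\,\mathbbm{1}[e_i\in\tilde H]\,Q_{e_i}(z) = Q_{\tilde H}(z)$ to show up in the metric, but your metric has no sampling indicator in it --- you bounded it away. The paper fixes exactly this by first symmetrizing with an independent copy $\hat H$ and Rademacher signs, then conditioning on $\tilde H$ and taking the chaining process only over the signs; the indicator $\tilde\zeta_i$ then survives in the metric, so the diameter and $\gamma_2$ are genuinely $\tilde H$-dependent and bounded via $\sup_{Q_H\le 1} Q_{\tilde H}(z) \le 1+Z$.

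Two further points. First, you attribute the $\log n$ factor in $\gamma_2$ to ``exploiting the log-det optimality of the reweightings''; this is a misattribution. The log-det optimality (via the KKT conditions on $c_{i,u,v}$) is used only in Proposition~\ref{prop:potential-increase} to bound the \emph{sample size}; the $\log n$ in the chaining bound comes simply from $\sqrt{\log(M+n)}$ with $M=n^2$ in the black-box $\gamma_2$ estimate of Proposition~\ref{prop:gamma-bound} (Lee's Corollary 2.13), and the $\log r$ from the max over $|e_i|\le r$ pairs, not from a hand-built admissible sequence. Second, the paper handles the case $p_i=1$ separately (replacing $\tilde\xi_i$ by $\tilde\zeta_i:=0$ there, since $\tilde\xi_i - \hat\xi_i = 0$ deterministically), which your formulation glosses over; this matters because $w_i/p_i$ is not small when $p_i=1$, yet those edges contribute no variance.
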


This is an \emph{implicit} bound on the exponential MGF of $Z := \sup_{z: Q_H^{\eta}(z) \le 1} |Q_{\tilde{H}}^{\eta}(z) - Q_H^{\eta}(z)|$ because $Z$ itself appears in the bound. 

The proof of \autoref{lem:exponential-mgf-bound} follows closely the chaining proofs in~\cite{Lee2023} and~\cite{Jambulapati2023b}, with a few modifications.
For completeness, we present the proof in full detail in \autoref{app:main-chaining-proof}.
In the remainder of the section, we will show how \autoref{lem:exponential-mgf-bound} implies that $Z \le \eps$ with high probability.

\subsection{High Probability Guarantee from \texorpdfstring{\autoref{lem:exponential-mgf-bound}}{Exponential MGF Bound}}
In order to obtain a high probability guarantee on the success probability, we must derive an \emph{explicit} upper bound on $\E_{\tilde{H}} \exp(\lambda Z)$.

Suppose $c = (\kappa_1 \log n \log r) / \eps^2$ and take $\lambda := (\kappa_2 \sqrt{\log n \log r}) / \eps$, where $\kappa_1$ is an absolute constant and $\kappa_2$ depends only on $\eps$. Let $\kappa := \kappa_2 / \sqrt{\kappa_1}$.
We will ensure that the parameters satisfy
$\kappa^2 + \kappa \sqrt{\log n \log r} \le \lambda / 2$. Then,
\begin{align*}
  \E_{\tilde{H}} \left[
      \exp (\lambda Z) \right]
  &\le
  \E_{\tilde{H}} \left[
    \exp\left(
      \frac{\lambda^2}{c} (1 + Z)
      +
      \lambda \sqrt{\frac{\log n \log r}{c}} (1 + Z)^{1/2}
    \right)
  \right]
  \tag{\autoref{lem:exponential-mgf-bound}}
  \\
  &=
  \E_{\tilde{H}} \left[
    \exp\left(
      \kappa^2 (1 + Z)
      +
      \kappa \sqrt{\log n \log r} (1 + Z)^{1/2}
    \right)
  \right]
  \\
  &\le
  \E_{\tilde{H}} \left[
    \exp\left(
      \kappa^2 (1 + Z)
      +
      \kappa \sqrt{\log n \log r} (1 + Z)
    \right)
  \right]
  \\
  &=
  \exp\left(
    \kappa^2 + \kappa \sqrt{\log n \log r}
  \right)
  \cdot
  \E_{\tilde{H}} \left[
    \exp\left(
      (\kappa^2 + \kappa \sqrt{\log n \log r}) Z
    \right)
  \right]
  \\
  &\le
  \exp\left(
    \kappa^2 + \kappa \log n
  \right)
  \cdot
  \E_{\tilde{H}} \left[
    \exp\left(\lambda Z
    \right)
  \right]^{\frac{\kappa^2 + \kappa \sqrt{\log n \log r}}{\lambda}}
  \tag{Jensen's inequality and $\frac{\kappa^2 + \kappa \sqrt{\log n \log r}}{\lambda} \le 1/2$}
  \\
  &\le
  \exp\left(
    \kappa^2 + \kappa \log n
  \right)
  \cdot
  \E_{\tilde{H}} \left[
    \exp\left(\lambda Z
    \right)
  \right]^{1/2},
\end{align*}
which resolves to
\[
  \E_{\tilde{H}}
  \left[
      \exp (\lambda Z)
  \right]
  \le
  \exp(2 \kappa^2 + 2 \kappa \log n).
\]
Markov's inequality then implies that
\[
 \Pr_{\tilde{H}} [Z > \eps]
 \le \frac{\E_{\tilde{H}}[\exp(\lambda Z)]}{e^{\lambda \eps}}
 \le \exp \left( 2 \kappa^2 + 2 \kappa \log n - \lambda \eps \right) = \exp \left( 2 \kappa^2 - (\kappa_2 - 2 \kappa) \log n \right).
\]
We can take $\kappa_1 \ge 16$ and $\kappa_2 \lesssim \min(\sqrt{\kappa_1}, 1/\eps)$  for the above analysis to go through, and for large enough $n$ or small enough $\eps$ (or both) satisfying $(2 \kappa^2 - (\kappa_2 - 2 \kappa) \log n) \le - \log n$. 
We arrive at the following conclusion.
\begin{lemma}[Success probability]\label{lem:sparsifier-success-probability}
 For Algorithm~\ref{alg:streaming-i}, let $Z := \sup_{z: Q_H^{\eta}(z) \le 1} |Q_{\tilde{H}}^{\eta}(z) - Q_H^{\eta}(z)|$. Then, we have $\Pr_{\tilde{H}} [Z > \eps] \le 1/n$.
  As a corollary, with probability $\ge 1-1/n$, $\tilde{H}$ is an $(\eps, \delta)$-spectral sparsifier of $H$.
\end{lemma}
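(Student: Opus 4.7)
The plan is to convert the implicit MGF bound from Lemma~4.1 into an \emph{explicit} upper bound on $\E_{\tilde{H}}[\exp(\lambda Z)]$ and then apply Markov's inequality to control $\Pr[Z > \eps]$. The crux is to choose the oversampling constant $c$ in Algorithm~\ref{alg:streaming-i} and the Markov parameter $\lambda$ so that the $Z$-dependent terms appearing inside the exponent on the right-hand side of Lemma~4.1 are dominated by $\lambda/2$; this will let me isolate $\E[\exp(\lambda Z)]$ on the left-hand side.

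Concretely, I would write $c = \kappa_1 \log n \log r / \eps^2$ and $\lambda = \kappa_2 \sqrt{\log n \log r}/\eps$ with $\kappa := \kappa_2/\sqrt{\kappa_1}$ to be tuned. After substitution, the exponent in Lemma~4.1 becomes $\kappa^2(1 + Z) + \kappa \sqrt{\log n \log r}(1+Z)^{1/2}$. Using $(1+Z)^{1/2} \le 1 + Z$ and $\sqrt{\log n \log r} \le \log n$ (since $r \le n$), the resulting $Z$-coefficient is at most $\alpha := \kappa^2 + \kappa \log n$, so $\E[\exp(\lambda Z)] \le e^{\alpha} \cdot \E[\exp(\alpha Z)]$.

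Next I would use Jensen's inequality in the form $\E[\exp(\alpha Z)] \le \E[\exp(\lambda Z)]^{\alpha/\lambda}$, valid once $\alpha/\lambda \le 1$. Enforcing the stronger condition $\alpha \le \lambda/2$ yields the self-bounding inequality $\E[\exp(\lambda Z)] \le e^\alpha \cdot \E[\exp(\lambda Z)]^{1/2}$, which rearranges to $\E[\exp(\lambda Z)] \le e^{2\alpha}$. Markov's inequality then gives $\Pr[Z > \eps] \le \exp(2\alpha - \lambda \eps) = \exp(2\kappa^2 + 2\kappa \log n - \kappa_2 \sqrt{\log n \log r})$, and I want this at most $1/n$.

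The main obstacle is the constant bookkeeping: $\kappa_1$ must be large enough (so that $\kappa$ is small and the condition $\alpha \le \lambda/2$ is satisfied), while simultaneously the net coefficient $\kappa_2 \sqrt{\log n \log r} - 2\kappa \log n$ must exceed $2\kappa^2 + \log n$. Taking, e.g., $\kappa_1 \ge 16$ with $\kappa_2$ a sufficiently small constant multiple of $\min(\sqrt{\kappa_1}, 1/\eps)$ makes $\kappa \le 1/4$ so that $\alpha \ll \lambda/2$, and for $n$ large or $\eps$ small the Markov exponent drops below $-\log n$, giving $\Pr[Z > \eps] \le 1/n$. The $(\eps, \delta)$-sparsifier corollary then follows immediately from the equivalence, stated just before the lemma, between $Z \le \eps$ and the multiplicative closeness of the ridged energies $Q_H^\eta$ and $Q_{\tilde H}^\eta$.
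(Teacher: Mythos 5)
Your proposal follows the paper's argument step for step: same choice of $c$ and $\lambda$, same substitution into Lemma~\ref{lem:exponential-mgf-bound}, same use of $(1+Z)^{1/2} \le 1+Z$, the same Jensen-based self-bounding trick to isolate $\E[\exp(\lambda Z)]$, and the same Markov step. The only bookkeeping difference is that you relax the Jensen exponent to $\alpha/\lambda$ with $\alpha = \kappa^2 + \kappa\log n$, whereas the paper keeps $(\kappa^2 + \kappa\sqrt{\log n\log r})/\lambda$ and only bounds the multiplicative constant by $\exp(\kappa^2 + \kappa\log n)$; both versions then verify the same constant constraints, so the proofs are essentially identical.
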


\section{Bounding the Sample Size}

The expected number of edges in $\tilde{H}$ is simply $\sum_i p_i \le \sum_i c w_i r_i$, but it is not easy to bound each $r_i$ directly.
To bound this more easily, we use a potential function
\[
  \Phi_i := \log \det \left( L_{i}^{\eta} \right).
\]

We will show that, every time an edge gets sampled, $\Phi_i$ increases substantially.
Then, we bound the value of $\Phi_m - \Phi_0$, which will in turn give a bound on the number of edges.

Since the update to $L_{G_i}$ is no longer rank-1, but rank-$|e_i|$, we will make use of concavity of the log-determinant function:

\begin{proposition}\label{prop:log-det-concave}
    The function $X \mapsto \log \det(X)$ is concave on the set of positive definite matrices.
\end{proposition}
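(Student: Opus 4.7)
The plan is to use the standard reduction of multivariate concavity to one-dimensional concavity on line segments. Fix two positive definite matrices $X_0, X_1$ and set $H := X_1 - X_0$, which is symmetric. Since the cone of positive definite matrices is convex, $X_t := X_0 + tH$ is positive definite for every $t \in [0,1]$, so the scalar function $f(t) := \log \det(X_t)$ is well-defined on this interval. Concavity of $X \mapsto \log \det(X)$ on the positive definite cone is equivalent to $f$ being concave on $[0,1]$ for every such choice of $X_0, X_1$, which I would establish by showing $f''(t) \le 0$.

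For the derivative computation, I would fix $t \in [0,1]$ and expand around it. Using the factorization $X_{t+s} = X_t^{1/2}\bigl(I + s\, X_t^{-1/2} H X_t^{-1/2}\bigr) X_t^{1/2}$, valid for $|s|$ small enough that $I + sM_t$ is still positive definite (where $M_t := X_t^{-1/2} H X_t^{-1/2}$), the multiplicativity of $\det$ gives
\[
\log \det(X_{t+s}) \;=\; \log \det(X_t) + \log \det(I + s M_t) \;=\; \log \det(X_t) + \sum_{i=1}^n \log(1 + s \lambda_i),
\]
where $\lambda_1, \dots, \lambda_n$ are the real eigenvalues of the symmetric matrix $M_t$. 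Differentiating twice in $s$ at $s = 0$ yields
\[
f''(t) \;=\; -\sum_{i=1}^n \frac{\lambda_i^2}{(1+s\lambda_i)^2}\bigg|_{s=0} \;=\; -\sum_{i=1}^n \lambda_i^2 \;\le\; 0,
\]
which is the required inequality.

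No substantive obstacle arises in this argument. The only care needed is to ensure that $I + sM_t$ remains positive definite on a small open interval around $s=0$, which is automatic since $M_t$ has only finitely many eigenvalues. An alternative approach would be to write $\log \det(X) = \tr(\log X)$ and invoke operator concavity of the scalar logarithm, but the direct second-derivative computation above is self-contained, elementary, and aligns with how the result will be applied later in the paper (namely, to bound increments of the potential $\Phi_i = \log\det(L_i^\eta)$ under rank-$|e_i|$ PSD updates).
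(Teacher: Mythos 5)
Your proof is correct and takes essentially the same route as the paper: both restrict to a line segment, factor out $X_t^{1/2}$ to reduce $\log\det(X_{t+s})$ to $\log\det(X_t) + \sum_i \log(1 + s\lambda_i)$ with $\lambda_i$ the eigenvalues of the symmetric matrix $X_t^{-1/2} H X_t^{-1/2}$, and conclude from one-dimensional concavity of the summands. The only cosmetic difference is that you compute $f''(t) = -\sum_i \lambda_i^2 \le 0$ explicitly where the paper simply cites concavity of $\lambda \mapsto \log(1+\lambda c_i)$.
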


\begin{proof}
    Let's just check concavity over all rays $X + \lambda Y$, where $X$ is positive definite, $Y$ is a symmetric matrix and $\lambda$ is in a small enough open interval containing $0$. We have
    \begin{eqnarray*}
      \log \det(X + \lambda Y) - \log \det (X)
      &=&
      \log \det(I + \lambda X^{-1/2} Y X^{-1/2})
      \\
      &=&
      \sum_{i=1}^n \log (1 + \lambda c_i),
    \end{eqnarray*}
    where $c_i \in \R$ are the eigenvalues of $X^{-1/2} Y X^{-1/2}$. Note that $\log \det(X)$ is constant;
    the result then follows from the concavity of the function $\lambda \mapsto \log(1 + \lambda c_i)$ near $\lambda = 0$ for each $i$.
\end{proof}

\begin{proposition}[Potential Increase]\label{prop:potential-increase}
  We have
  \[
    \Phi_i - \Phi_{i-1} \ge \frac{p_i \log 2}{c}.
  \]
\end{proposition}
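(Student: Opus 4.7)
I would first rewrite $\Phi_i - \Phi_{i-1}$ in a form where a trace lower bound is immediate, and then use the first-order optimality (KKT) of the reweighting $(c_{i,u,v})$ to identify that trace with $w_i r_i$. Set $A := L_{i-1}^{\eta}$ and $B := L_{i}^{\eta} = A + \sum_{u,v \in e_i} w_i c_{i,u,v} L_{uv}$, and let $M := B^{-1/2}(B-A)B^{-1/2}$. Because $B - A$ is a nonnegative combination of PSD edge Laplacians, $M \succeq 0$; because $A \succeq \eta I \succ 0$, the matrix $I - M = B^{-1/2} A B^{-1/2}$ is positive definite. Hence every eigenvalue $\lambda_k$ of $M$ lies in $[0,1)$, and using the scalar inequality $-\log(1-x) \geq x$ on $[0,1)$ eigenvalue-by-eigenvalue gives
\[
  \Phi_i - \Phi_{i-1} \;=\; -\log\det(I - M) \;=\; -\sum_k \log(1 - \lambda_k) \;\geq\; \tr(M).
\]
Cyclicity of trace together with $\tr(B^{-1} L_{uv}) = \norm{B^{-1/2}(\chi_u - \chi_v)}_2^2$ then gives
\[
  \tr(M) \;=\; \sum_{u,v \in e_i} w_i c_{i,u,v} \,\norm{B^{-1/2}(\chi_u - \chi_v)}_2^2.
\]

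Next I would identify $\tr(M)$ with $w_i r_i$. The reweighting $(c_{i,u,v})$ solves a concave maximization (Proposition~\ref{prop:log-det-concave}) over the simplex $\{c \geq 0, \sum c_{u,v} = 1\}$, whose constraints are affine, so KKT conditions apply. Differentiating $\log\det(A + \sum_{u,v} w_i c_{u,v} L_{uv})$ in $c_{u,v}$ and using that at the optimum the matrix inside the determinant is $B$, there exist $\mu \in \R$ and $\lambda_{u,v} \geq 0$ with $\lambda_{u,v} c_{i,u,v} = 0$ such that
\[
  w_i \norm{B^{-1/2}(\chi_u - \chi_v)}_2^2 \;=\; \mu - \lambda_{u,v} \quad \text{for all } (u,v) \in e_i.
\]
Therefore $w_i \norm{B^{-1/2}(\chi_u - \chi_v)}_2^2 \leq \mu$ throughout $e_i$, with equality whenever $c_{i,u,v} > 0$. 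Since the $c_{i,u,v}$ sum to $1$, at least one pair is active, so the common value $\mu/w_i$ equals the max, i.e.\ $w_i r_i = \mu$. Plugging back, $\tr(M) = \mu \sum_{u,v} c_{i,u,v} = \mu = w_i r_i$, and consequently $\Phi_i - \Phi_{i-1} \geq w_i r_i$. Finally, from $p_i = \min(1, c w_i r_i) \leq c w_i r_i$ we get $\Phi_i - \Phi_{i-1} \geq w_i r_i \geq p_i/c \geq p_i (\log 2)/c$, as $\log 2 < 1$.

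The main obstacle is the second step: tying the trace on the left---which naturally involves $B^{-1}$---to $r_i$, which is also defined via $B^{-1}$. A naive ``trial reweighting'' argument that concentrates mass on the pair attaining $r_i$ and applies the matrix determinant lemma produces a bound in terms of $A^{-1}$, not $B^{-1}$, so the argument must go through the first-order conditions at the actual optimum. A pleasant byproduct is the tighter bound $\Phi_i - \Phi_{i-1} \geq p_i/c$, of which the stated $p_i(\log 2)/c$ is a mild weakening that is convenient for later use.
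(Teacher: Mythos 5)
Your proof is correct, and it takes a genuinely different route from the paper's.

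The paper's argument splits the potential increase via concavity of $\log\det$ (\autoref{prop:log-det-concave}) into a convex combination of rank-one increments
$\sum_{u,v} c_{i,u,v}\bigl(\log\det(L_{i-1}^\eta + w_i L_{uv}) - \log\det(L_{i-1}^\eta)\bigr)$,
applies the matrix determinant lemma to each to get $\log(1 + w_i R_{i-1}(u,v))$, uses monotonicity of effective resistance under edge addition to pass from $R_{i-1}$ to $R_i$, invokes the KKT condition to replace $R_i(u,v)$ by $r_i$ on the support of $c$, and finally bounds $\log(1 + w_i r_i) \ge \log(1 + p_i/c) \ge (p_i/c)\log 2$. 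Your proof instead writes $\Phi_i - \Phi_{i-1} = -\log\det(I-M)$ with $M = B^{-1/2}(B-A)B^{-1/2}$ (correctly checking $0 \preceq M \prec I$), applies the eigenvalue-wise inequality $-\log(1-x)\ge x$ to get a lower bound by $\tr(M)$, and then uses the KKT multiplier identification $\mu = w_i r_i$ together with complementary slackness to show $\tr(M) = w_i r_i$ exactly, giving $\Phi_i - \Phi_{i-1} \ge w_i r_i \ge p_i/c$. What your route buys: you avoid the concavity decomposition and the resistance-monotonicity lemma entirely, you get a slightly cleaner constant ($p_i/c$ rather than $(\log 2)\,p_i/c$), and—because your trace naturally lives at $B^{-1} = (L_i^\eta)^{-1}$—you tie the bound directly to the quantity $r_i$ that the algorithm actually samples with, whereas the paper's route first lands at $(L_{i-1}^\eta)^{-1}$ and must invoke Loewner monotonicity to get to $r_i$. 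What the paper's route buys is that it is the more standard textbook decomposition and hews more closely to the offline argument of Lee~\cite{Lee2023}, which makes the parallel with prior work explicit. Both proofs use the KKT stationarity/complementary-slackness structure in an essential way; you state it a bit more explicitly and also exploit the value of the multiplier $\mu$, not just the fact that the effective resistance is constant on the support.
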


\begin{proof}
    Let us write $L_i^{\eta} := L_{i-1}^{\eta} + w_i \sum_{u, v \in e_i} c_{i, u, v} L_{uv}$.
    Let $R_i(u,v)$ denote the maximum ridged effective resistance between $u$ and $v$.
    By a KKT-condition argument similar to~\cite[Section 3.3]{Lee2023}, an optimal solution $c_{i, u, v}$ satisfies that
    $c_{i, u, v} > 0$ implies $R_i(u, v) = r_i := \max_{u', v' \in e_i} R_i(u', v')$. 
    Then,
    \begin{eqnarray*}
      \Phi_i - \Phi_{i-1}
      &=&
      \log \det(L_{i-1}^{\eta} + w_i \sum_{u, v \in e_i} c_{i, u, v} L_{uv}) - \log \det(L_{i-1}^{\eta})
      \\
      &\ge&
      \sum_{u, v \in e_i} c_{i, u, v} \left(
        \log \det(L_{i-1}^{\eta} + w_i L_{uv}) - \log \det(L_{i-1}^{\eta})
      \right)
      \qquad (\text{by \autoref{prop:log-det-concave}})
      \\
      &=&
      \sum_{u, v \in e_i} c_{i, u, v} \log \det \left( I + w_i (L_{i-1}^{\eta})^{-1} L_{uv} \right)
      \\
      &=&
      \sum_{u, v \in e_i} c_{i, u, v} \log(1 + w_i R_{i-1}(u, v))
      \\
      &\ge&
      \sum_{u, v \in e_i} c_{i, u, v} \log(1 + w_i R_i(u, v))
      \qquad (\text{Eff.\ resistance only decreases with edge addition})
      \\
      &=&
      \sum_{u, v \in e_i} c_{i, u, v} \log(1 + w_i r_i)
      \\
      &\ge&
      \log \left(1 + \frac{p_i}{c} \right)
      \\
      &\ge&
      \frac{p_i \log 2}{c}.
    \end{eqnarray*}
    The final inequality uses the fact that $\log(1 + x) \ge x \log 2$ for $x \in [0, 1]$. We know $p_i \le 1$ and we can easily set $c \ge 1$.
    This concludes the proof.
\end{proof}

\begin{lemma}[Number of sampled edges]\label{lem:sparsifier-size-bound}
We have
    \[
     \E[ |\tilde{H}| ] \lesssim c n \log \left( 1 + \frac{2W}{\eta n} \right),
    \]
where $W = \sum_{i=1}^m w_i$. Moreover,
\[
  |\tilde{H}| \lesssim c n \log \left( 1 + \frac{2W}{\eta n} \right)
\]
with probability at least $1 - 1/n$.
\end{lemma}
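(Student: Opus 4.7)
The plan is to reduce both bounds to the per-step potential-increase estimate of \autoref{prop:potential-increase} via telescoping, and then apply Chernoff for the high-probability statement. A crucial structural point (emphasised in the algorithm description) is that each sampling probability $p_i$ depends only on the arriving stream $e_1,\dots,e_i$ and not on previous coin flips, so the indicators $X_i\sim\mathrm{Ber}(p_i)$ are mutually independent. In particular,
\[
  \E[|\tilde H|] \;=\; \sum_{i=1}^{m} p_i
  \;\le\; \frac{c}{\log 2}\sum_{i=1}^{m}(\Phi_i - \Phi_{i-1})
  \;=\; \frac{c}{\log 2}(\Phi_m-\Phi_0),
\]
where the middle inequality is just \autoref{prop:potential-increase}.

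Next I would bound $\Phi_m-\Phi_0$ explicitly. Since $G_0$ is empty, $\Phi_0 = \log\det(\eta I_n) = n\log\eta$. For the upper end, I would use AM--GM on the eigenvalues of the positive-definite matrix $L_m^{\eta}$:
\[
  \det(L_m^\eta) \;\le\; \left(\tfrac{1}{n}\tr(L_m^\eta)\right)^n.
\]
The trace is easy to compute: $\tr(L_m^\eta) = \eta n + \tr(L_{G_m})$, and since $\tr(L_{uv})=2$ for every edge Laplacian and $\sum_{u,v\in e_i}c_{i,u,v}=1$ by the reweighting constraint,
\[
  \tr(L_{G_m}) \;=\; \sum_{i=1}^{m} w_i \sum_{u,v\in e_i} c_{i,u,v}\,\tr(L_{uv}) \;=\; 2W.
\]
Therefore $\Phi_m-\Phi_0 \le n\log\bigl(1+\tfrac{2W}{\eta n}\bigr)$, and plugging back yields the claimed bound on $\E[|\tilde H|]$.

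For the high-probability statement I would invoke \autoref{prop:chernoff} applied to the independent Bernoullis $X_1,\ldots,X_m$ with $\mu = \sum_i p_i \le \frac{c n}{\log 2}\log\bigl(1+\tfrac{2W}{\eta n}\bigr)$. Taking $\lambda = 1$ and $M = (e-2)\mu + \log n$ inside the Chernoff infimum gives exponent $\mu(e-1-1)-M = -\log n$, so that $\Pr[|\tilde H| > (e-1)\mu + \log n] \le 1/n$. Substituting the bound on $\mu$ gives $|\tilde H| \lesssim c n\log\bigl(1+\tfrac{2W}{\eta n}\bigr)+\log n$ with probability at least $1-1/n$, and since $c \ge 1$ (recall $c=\Theta(\eps^{-2}\log n\log r)$) the additive $\log n$ is absorbed into the stated asymptotic bound.

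There is essentially no hard obstacle here once \autoref{prop:potential-increase} is in hand: the only nontrivial analytic step is the AM--GM trace bound on $\det(L_m^\eta)$, and the rest is a routine Chernoff application that uses the independence guaranteed by the design of the sampling probabilities.
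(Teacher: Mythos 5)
Your proof matches the paper's argument step for step: telescope \autoref{prop:potential-increase} to obtain $\sum_i p_i \lesssim c\,(\Phi_m - \Phi_0)$, bound $\Phi_m - \Phi_0$ by AM--GM on the eigenvalues of $L_m^\eta$ together with $\tr L_{G_m} = 2W$, and finish with \autoref{prop:chernoff}. The Chernoff step is only described as ``a standard application'' in the paper; you have spelled it out explicitly (taking $\lambda = 1$, $M = (e-2)\mu + \log n$) and correctly noted the key structural fact that the Bernoulli indicators are mutually independent because each $p_i$ depends on the arrival stream but not on earlier coin flips. One soft spot: your justification that the additive $\log n$ is absorbed ``since $c \ge 1$'' is not quite airtight --- $\log\bigl(1 + 2W/(\eta n)\bigr)$ can be arbitrarily small when $W \ll \eta n$, so strictly one should carry a $+\log n$ in the high-probability bound (or assume $\mu \gtrsim \log n$); the paper elides this same corner case, so it is not a deviation from its proof, just a loose end worth being aware of.
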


\begin{proof}
    We first bound the expected number of sampled edges. It follows rather straight-forwardly from \autoref{prop:potential-increase}.
    \begin{align*}
      \E[|\tilde{H}|]
      = \sum_{i \in [m]} p_i
      &\lesssim
      c \cdot (\Phi_m - \Phi_0)
      \\
      &=
      c \cdot \left(
       \log \det (L_m + \eta I) - \log \det(\eta I)
      \right)
      \\
      &= c \cdot \log \det (I + \eta^{-1} L_m).
    \end{align*}
    By the AM-GM inequality, we have
    \[
        \det(I + \eta^{-1}L_m) = \prod_{i=1}^n (1 + \eta^{-1}\lambda_i(L_m))
      \leq \left( 1 + \frac{1}{\eta n}\sum_{i=1}^n \lambda_i(L_m) \right)^n
      = \left( 1 + \frac{\tr L_m}{\eta n} \right)^n.
    \]
    Plugging this into the previous inequality, we get
    \begin{align*}
      \E[|\tilde{H}|]
      &\lesssim
      cn \cdot \log \left( 1 + \frac{\tr L_m}{\eta n} \right).
    \end{align*}
    Finally, 
    \[
        \tr L_m = \sum_{i=1}^m w_i \sum_{u, v \in e_i} c_{i, u, v} \tr{((\chi_u - \chi_v)(\chi_u - \chi_v)^\top)}
        = 2 \sum_{i=1}^m w_i \sum_{u, v \in e_i} c_{i, u, v} = 2W,
    \]
    and the desired bound on $\E|\tilde{H}|$ is established.
    The high probability guarantee then follows from a standard application of \autoref{prop:chernoff}.
\end{proof}

Combining \autoref{lem:sparsifier-success-probability} and \autoref{lem:sparsifier-size-bound} yields \autoref{thm:streaming-main}.

\paragraph{Proof of \autoref{cor:eps-sparsifier}.}
Now we prove \autoref{cor:eps-sparsifier}.
By the hypergraph Cheeger inequality~\cite{Chan2018}, $Q_H(x) \gtrsim W_{\min}^2n^{-2r} \norm{x}_2^2$ for $x \in \R^n$ with $x \perp \one$. 
Since $\delta = O(\eps W_{\min}^2 n^{-2r})$, we have $\eps Q_H(x) \geq \delta\norm{x}_2^2$.
So an $(\eps, \delta)$-spectral sparsifier is indeed a $(2\eps, 0)$-spectral sparsifier.
Therefore, Algorithm~\ref{alg:streaming-i} outputs an $(2\eps, 0)$-spectral sparsifier with a constant probability.
The expected number of hyperedges is immediate from \autoref{thm:streaming-main}.

\section{Conclusion}
To summarize, we designed and analyzed the first online algorithm for hypergraph spectral sparsification, showing that it uses significantly less space than the number of edges.
We leave open the following questions concerning the performance of the algorithm:

\begin{question}
  Can we derive a matching lower bound on the space complexity of any (online) streaming algorithm for spectral hypergraph sparsification?
\end{question}

\begin{question}
  Can we improve the space complexity from $O(n^2)$ to $O(n r \polylog m)$, or even better?
  Such an algorithm would perform better when the rank of the hypergraph is small.
\end{question}

While this paper focused on the insertion-only setting, the fully dynamic setting is also of interest.

\begin{question}
  Can we obtain an efficient fully dynamic algorithm (i.e., one that supports both hyperedge insertion and deletion) for spectral hypergraph sparsification?
\end{question}

\subsubsection*{Acknowledgements}
TS is supported by JSPS KAKENHI Grant Number JP19K20212.
A part of this work was done during KT's visit to National Institute of Informatics.
YY is supported by JSPS KAKENHI Grant Number JP20H05965 and JP22H05001.

\bibliographystyle{alpha}
\bibliography{main}

\newcommand{\etalchar}[1]{$^{#1}$}
\begin{thebibliography}{KLM{\etalchar{+}}17}

\bibitem[BSST13]{Batson2013}
Joshua Batson, Daniel~A Spielman, Nikhil Srivastava, and Shang-Hua Teng.
\newblock Spectral sparsification of graphs: Theory and algorithms.
\newblock {\em Communications of the ACM}, 56(8):87--94, 2013.

\bibitem[BST19]{Bansal2019}
Nikhil Bansal, Ola Svensson, and Luca Trevisan.
\newblock New notions and constructions of sparsification for graphs and hypergraphs.
\newblock In {\em Proceedings of the IEEE 60th Annual Symposium on Foundations of Computer Science (FOCS)}, pages 910--928, 2019.

\bibitem[CL06]{CL2006}
Fan Chung and Linyuan Lu.
\newblock Concentration inequalities and martingale inequalities: a survey.
\newblock {\em Internet mathematics}, 3(1):79--127, 2006.

\bibitem[CLTZ18]{Chan2018}
T.-H.~Hubert Chan, Anand Louis, Zhihao~Gavin Tang, and Chenzi Zhang.
\newblock Spectral properties of hypergraph laplacian and approximation algorithms.
\newblock {\em Journal of the ACM}, 65(3):1--48, 2018.

\bibitem[CLV16]{Calandriello2016}
Daniele Calandriello, Alessandro Lazaric, and Michal Valko.
\newblock Analysis of {Kelner and Levin} graph sparsification algorithm for a streaming setting.
\newblock {\em arXiv preprint arXiv:1609.03769}, 2016.

\bibitem[CMP20]{Cohen2020}
Michael~B. Cohen, Cameron Musco, and Jakub Pachocki.
\newblock Online row sampling.
\newblock {\em Theory of Computing}, 16(1):1--25, 2020.

\bibitem[JLLS23]{Jambulapati2023b}
Arun Jambulapati, James~R Lee, Yang~P Liu, and Aaron Sidford.
\newblock Sparsifying sums of norms.
\newblock In {\em Proceedings of the IEEE 64th Annual Symposium on Foundations of Computer Science (FOCS)}, 2023.
\newblock to appear.

\bibitem[JLS23]{Jambulapati2023}
Arun Jambulapati, Yang~P Liu, and Aaron Sidford.
\newblock Chaining, group leverage score overestimates, and fast spectral hypergraph sparsification.
\newblock In {\em Proceedings of the 55th Annual ACM Symposium on Theory of Computing (STOC)}, pages 196--206, 2023.

\bibitem[KKTY21]{Kapralov2021}
Michael Kapralov, Robert Krauthgamer, Jakab Tardos, and Yuichi Yoshida.
\newblock Towards tight bounds for spectral sparsification of hypergraphs.
\newblock In {\em Proceedings of the 53rd Annual ACM SIGACT Symposium on Theory of Computing (STOC)}, pages 598--611, 2021.

\bibitem[KKTY22]{Kapralov2022}
Michael Kapralov, Robert Krauthgamer, Jakab Tardos, and Yuichi Yoshida.
\newblock Spectral hypergraph sparsifiers of nearly linear size.
\newblock In {\em Proceedings of the 62nd IEEE Annual Symposium on Foundations of Computer Science (FOCS)}, pages 1159--1170, 2022.

\bibitem[KL13]{Kelner2013}
Jonathan~A Kelner and Alex Levin.
\newblock Spectral sparsification in the semi-streaming setting.
\newblock {\em Theory of Computing Systems}, 53(2):243--262, 2013.

\bibitem[KLM{\etalchar{+}}17]{Kapralov2017}
Michael Kapralov, Yin~Tat Lee, CN~Musco, Christopher~Paul Musco, and Aaron Sidford.
\newblock Single pass spectral sparsification in dynamic streams.
\newblock {\em SIAM Journal on Computing}, 46(1):456--477, 2017.

\bibitem[Lee23]{Lee2023}
James~R. Lee.
\newblock Spectral hypergraph sparsification via chaining.
\newblock In {\em Proceedings of the 55th Annual ACM Symposium on Theory of Computing (STOC)}, pages 207--218, 2023.

\bibitem[OST23]{Oko2023}
Kazusato Oko, Shinsaku Sakaue, and Shin-ichi Tanigawa.
\newblock Nearly tight spectral sparsification of directed hypergraphs.
\newblock In {\em The proceedings of the 50th International Colloquium on Automata, Languages, and Programming (ICALP)}, 2023.

\bibitem[RY22]{Rafiey2022}
Akbar Rafiey and Yuichi Yoshida.
\newblock Sparsification of decomposable submodular functions.
\newblock In {\em Proceedings of the AAAI Conference on Artificial Intelligence}, volume~36, pages 10336--10344, 2022.

\bibitem[SS11]{Spielman2011b}
Daniel~A. Spielman and Nikhil Srivastava.
\newblock Graph sparsification by effective resistances.
\newblock {\em SIAM Journal on Computing}, 40(6):1913--1926, 2011.

\bibitem[ST11]{Spielman2011}
Daniel~A Spielman and Shang-Hua Teng.
\newblock Spectral sparsification of graphs.
\newblock {\em SIAM Journal on Computing}, 40(4):981--1025, 2011.

\bibitem[SY19]{Soma2019}
Tasuku Soma and Yuichi Yoshida.
\newblock Spectral sparsification of hypergraphs.
\newblock In {\em Proceedings of the 30th Annual ACM-SIAM Symposium on Discrete Algorithms (SODA)}, pages 2570--2581. SIAM, 2019.

\bibitem[Tal14]{Talagrand2014}
Michel Talagrand.
\newblock {\em Upper and lower bounds for stochastic processes}.
\newblock Springer, 2014.

\bibitem[Vis13]{Vishnoi2013}
Nisheeth~K. Vishnoi.
\newblock {$Lx = b$}, laplacian solvers and their algorithmic applications.
\newblock {\em Foundations and Trends\textregistered{} in Theoretical Computer Science}, 8(1-2):1--141, 2013.

\end{thebibliography}

\appendix

\section{Proof of \autoref{lem:exponential-mgf-bound}}\label{app:main-chaining-proof}

\subsection{Symmetrization}
Chaining works best when the random variables involved are symmetric.
Therefore, we first make the quantity to be controlled more symmetric.
Since $Q_H^{\eta}(z) = \E_{\hat{H}} Q_{\hat{H}}^{\eta}(z)$, by Jensen's inequality we have
\[
\E_{\tilde{H}} \left[ \exp(\lambda Z)
  \right]
\le
\E_{\tilde{H}, \hat{H}} \left[
  \exp \left(
  \lambda \sup_{ Q_H^{\eta}(z) \le 1 }
    | Q_{\tilde{H}}^{\eta}(z) - Q_{\hat{H}}^{\eta}(z) |
  \right)
  \right].
\]
Here $\hat{H}$ is an independent copy of $\tilde H$.
We can write the inside term as
\[
  \sum_{i=1}^\top w_i \left( \frac{\tilde{\xi}_i}{p_i} - \frac{\hat{\xi}_i}{p_i} \right)
  \cdot Q_{e_i}(z)
\]
since the $\eta z^\top z$ terms cancel out one another.
Here, $\tilde{\xi}_i$ and $\hat{\xi}_i$ are independent $\mathrm{Ber}(p_i)$ random variables corresponding to $\tilde H$ and $\hat H$, respectively.

We would like to deal with the special case $p_i = 1$ separately.
Notice that when $p_i = 1$, $\tilde{\xi}_i - \hat{\xi}_i$ always equals $0$. Therefore, if we set
\[
  \tilde{\zeta}_i := \begin{cases}
      \tilde{\xi}_i & \text{if } p_i < 1;
      \\
      0 & \text{otherwise}
  \end{cases}
  \qquad \text{and} \qquad
  \hat{\zeta}_i :=
  \begin{cases}
      \hat{\xi}_i & \text{if } p_i < 1;
      \\
      0 & \text{otherwise},
  \end{cases}
\]
then $\tilde{\xi}_i - \hat{\xi}_i$ is distributed the same as $\tilde{\zeta}_i - \hat{\zeta}_i$.
Next, notice that $\tilde{\xi}_i - \hat{\xi}_i$ is symmetrically distributed, which in turn implies that
$\tilde{\zeta}_i - \hat{\zeta}_i$
is symmetrically distributed as well. Therefore, $\tilde{\zeta}_i - \hat{\zeta}_i$ is distributed the same as $\eps_i (\tilde{\zeta}_i - \hat{\zeta}_i)$ where $\eps_i$ takes values $+1$, $-1$ with equal probability and is independent of all other random variables.
To summarize,
\[
\E_{\tilde{H}, \hat{H}} \left[ \exp\left(
  \lambda \sup_{ Q_H^{\eta}(z) \le 1 }
    | Q_{\tilde{H}}^{\eta}(z) - Q_{\hat{H}}^{\eta}(z) |
  \right)
  \right]
=
\E_{\tilde{H}, \hat{H}} \E_{(\eps_i)}\left[
  \exp\left(
  \lambda \sup_{ Q_H^{\eta}(z) \le 1 }
    \left| 
    \sum_{i=1}^m \frac{w_i}{p_i} \cdot \eps_i (\tilde{\zeta_i} - \hat{\zeta_i}) \cdot Q_{e_i}(z)
    \right|
  \right)
\right].
\]

Write $G = G_m$. By triangle inequality and rearrangement inequality,
\begin{eqnarray*}
&&
\E_{\tilde{H}, \hat{H}} \E_{(\eps_i)}\left[
  \exp\left(
  \lambda \sup_{ Q_H^{\eta}(z) \le 1 }
    \left| 
    \sum_{i=1}^m \frac{w_i}{p_i} \cdot \eps_i (\tilde{\zeta_i} - \hat{\zeta_i}) \cdot Q_{e_i}(z)
    \right|
  \right)
\right]
\\
&\le&
\E_{\tilde{H}} \E_{(\eps_i)} \left[
  \exp\left(
  2 \lambda \sup_{ Q_H^{\eta}(z) \le 1 }
    \left| 
    \sum_{i=1}^m \frac{w_i}{p_i} \cdot \eps_i \tilde{\zeta_i} \cdot Q_{e_i}(z)
    \right|
  \right)
\right]
\\
&=&
\E_{\tilde{H}} \E_{(\eps_i)} \left[
  \exp\left(
  2 \lambda \sup_{ x \in X}
    \left| 
    \sum_{i=1}^m \frac{w_i}{p_i} \cdot \eps_i \tilde{\zeta_i} \cdot Q_{e_i}((L_G^{\eta})^{-1/2} x)
    \right|
  \right)
\right],
\end{eqnarray*}
where we applied the change of variables $x := (L_G^{\eta})^{1/2} z$ and $X := \{x \in \R^V: Q_H^{\eta}((L_G^{\eta})^{-1/2} x) \le 1\}$.
Note that by \autoref{prop:energy-comparison}, $X \subseteq B_2^n$ where $B_2^n$ is the closed unit ball in $\R^n$.

\subsection{Setting up the Metric}

Consider the inner expectation
\[
\E_{(\eps_i)} \left[
  \exp\left(
  2 \lambda \sup_{ x \in X}
    \left| 
    \sum_{i=1}^m \frac{w_i}{p_i} \cdot \eps_i \tilde{\zeta_i} \cdot Q_{e_i}((L_G^{\eta})^{-1/2} x)
    \right|
  \right)
\right].
\]
For each fixed $\tilde{H}$ (which means fixing the $\tilde{\zeta}_i$'s), let
\[
  V_x := 
    \sum_{i=1}^m \frac{w_i}{p_i} \cdot \eps_i \tilde{\zeta_i} \cdot Q_{e_i}((L_G^{\eta})^{-1/2} x).
\]

We would like to apply \autoref{prop:azuma} to control the difference between $V_x$ and $V_y$ for any two points $x, y \in X$.
Set
\[
  \psi_i := \sum_{j=1}^i \frac{w_j}{p_j} \cdot \eps_j \tilde{\zeta}_j \cdot (Q_{e_j} ((L_G^{\eta})^{-1/2}x) - Q_{e_j} ((L_G^{\eta})^{-1/2} y)).
\]
Note that $\psi_0 = 0$ and $\psi_m = V_x - V_y$, and that $(\psi_i)$ is a martingale. The difference $|\psi_i - \psi_{i-1}|$ is always bounded by $M_i := (w_i / p_i) \cdot \tilde{\zeta}_i \cdot (Q_{e_i} ((L_G^{\eta})^{-1/2}x) - Q_{e_i} ((L_G^{\eta})^{-1/2} y))$.
Then, \autoref{prop:azuma} gives
\[
  \Pr \left[
    |V_x - V_y| \ge \beta
  \right]
  \le
  2 \exp \left(
    \frac{-\beta^2}{2 d(x, y)^2}
  \right),
\]
where
\[
d(x, y) := \sqrt{
  \sum_{i=1}^m \left( \frac{w_i}{p_i}\right)^2 \cdot \tilde{\zeta_i}^2 \cdot (Q_{e_i} ((L_G^{\eta})^{-1/2}x) - Q_{e_i} ((L_G^{\eta})^{-1/2} y))^2
}.
\]
Since $d(x, y)$ is the $\ell_2$ distance between the images of $x$ and $y$ under a mapping, it is a semi-metric on $X$.

\subsection{Bounding Chaining Functional}

In order to apply \autoref{lem:chaining-whp}, we would like to upper bound both $\gamma_2(X, d)$ and $\Delta(X, d)$.
We first bound $\gamma_2(X, d)$.
The following chaining guarantee is the key bound in~\cite{Lee2023}.

\begin{proposition}[Bound on $\gamma_2(X, d)$; see~{\cite[Corollary~2.13]{Lee2023}}]\label{prop:gamma-bound}
  Suppose $X \subseteq B_2^n$ and that a metric $d(\cdot, \cdot)$ of the form
  \[
    d(x, x') := \left(
      \sum_{i=1}^m \left|
        \phi_i(Ax)^2 - \phi_i(Ax')^2
      \right|^2
    \right)^{1/2},
  \]
  where $A: \R^n \rightarrow \R^M$ is a linear map and $\phi_1, \dots, \phi_m: \R^M \rightarrow \R$ are semi-norms in the form of 
  \[ 
    \phi_i(z) = \max_{j \in S_i} \omega_j \abs{(Az)_j}
  \]
  for some $S_i \subseteq [M]$ and $\omega_j \geq 0$ ($j \in S_i$).
  Let $\alpha > 0$ be a constant such that $|\phi_i(z) - \phi_i(z')| \le \alpha \norm{z - z'}_{\infty}$ for all $i \in [m]$. 
  Let $r = \max_{i \in [m]} \abs{S_i}$.
  Then,
  \[
  \gamma_2(X, d) \lesssim 
  \alpha \sqrt{\log (M + n) \cdot \log r} \norm{A}_{2 \rightarrow \infty}
  \cdot
  \sup_{x \in X} \left(
    \sum_{i=1}^m \phi_i(Ax)^2
  \right)^{1/2}.
  \]
\end{proposition}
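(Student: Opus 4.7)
My approach is to apply Talagrand's generic chaining (in the form of $\gamma_2$-functionals) to the metric $d$, using the ``difference of squares'' identity $a^2 - b^2 = (a-b)(a+b)$ to separate an envelope factor from a Lipschitz-difference factor. The envelope corresponds to the $\sup_{x\in X}(\sum_i \phi_i(Ax)^2)^{1/2}$ on the right-hand side, while the Lipschitz-difference factor is what the chaining actually controls.

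\textit{Step 1: Reduce via factorization.} Each summand of $d(x,x')^2 = \sum_i (\phi_i(Ax)^2 - \phi_i(Ax')^2)^2$ factors into $(\phi_i(Ax)+\phi_i(Ax'))^2 \cdot (\phi_i(Ax)-\phi_i(Ax'))^2$. The ``sum'' factor is globally bounded: $\phi_i(Ax) \le (\sum_j \phi_j(Ax)^2)^{1/2} \le E$ where $E := \sup_{y\in X}(\sum_i \phi_i(Ay)^2)^{1/2}$ is the envelope. This extracts $E$ out of the chaining and leaves the task of bounding $\gamma_2(X, \tilde\rho)$, where $\tilde\rho(x,x')^2 := \sum_i (\phi_i(Ax) - \phi_i(Ax'))^2$.

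\textit{Step 2: Two-scale chaining for $\tilde\rho$.} The metric $\tilde\rho$ is an $\ell_2$-aggregate (over $i \in [m]$) of terms $|\phi_i(Ax)-\phi_i(Ax')|$, each of which is $\alpha$-Lipschitz in $\norm{\cdot}_\infty$ and depends on at most $|S_i| \le r$ coordinates. I would invoke the standard Talagrand framework for ``max-type seminorms'' and construct an admissible sequence on $X \subseteq B_2^n$ that simultaneously resolves two scales: (i) an \emph{outer} chain in $\norm{A\cdot}_\infty$ on $X$, contributing $\sqrt{\log(M+n)}\,\norm{A}_{2\to\infty}$ via a Maurey-type empirical approximation of $B_2^n$ in $\ell_\infty$; and (ii) an \emph{inner} chain over the at-most-$r$ coordinates inside each $\phi_i$, contributing $\sqrt{\log r}$ by union-bounding only on $S_i$ rather than on all of $[M]$. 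Verifying subgaussian increments at each scale using Azuma's inequality and multiplying yields $\gamma_2(X,\tilde\rho) \lesssim \alpha \sqrt{\log(M+n)\log r}\,\norm{A}_{2\to\infty}$.

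\textit{Step 3: Assemble.} Multiplying $2E$ from Step~1 with the chaining bound from Step~2 gives the claim.

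The main obstacle is Step~2: the two logarithmic factors must appear as a product, not a sum, which requires carefully coupling the outer chain on $X$ with the inner chain on coordinates so the admissible sequence is genuinely ``two-dimensional'' at every scale. A naive one-scale bound --- e.g., estimating $|\phi_i(Ax)-\phi_i(Ax')| \le \alpha \norm{A(x-x')}_\infty$ for each $i$ and then summing --- would inject a spurious $\sqrt{m}$ factor and miss the advertised bound entirely. The $\log(M+n)$ (rather than $\log M$) reflects that the covering argument must simultaneously handle the ambient dimension of $X$ (via the $2\to\infty$ norm of $A$ mapping $\ell_2$-balls to $\ell_\infty$-balls) and the image dimension via coordinate projections.
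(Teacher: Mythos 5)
The paper offers no proof of this proposition; it is cited verbatim as Corollary~2.13 of Lee~(2023). So I can only assess your sketch on its own terms, and there is a genuine and fatal gap in Step~1 that makes the target of Step~2 unattainable.

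You bound each ``sum'' factor $\phi_i(Ax)+\phi_i(Ax')$ uniformly by $2E$ and conclude $d(x,x') \le 2E\,\tilde\rho(x,x')$ with $\tilde\rho(x,x')^2 := \sum_i (\phi_i(Ax)-\phi_i(Ax'))^2$, reducing to the claim $\gamma_2(X,\tilde\rho) \lesssim \alpha\sqrt{\log(M+n)\log r}\,\norm{A}_{2\to\infty}$. That intermediate claim is false, so no amount of two-scale cleverness in Step~2 can recover it. Concretely, take $A=I$, $M=n$, $X=B_2^n$, $m=n$, and $\phi_i(z)=\abs{z_i}$ (pad $S_i=\{i,n\}$ with $\omega_n=0$ so $r=2$), so that $\alpha=1$, $\norm{A}_{2\to\infty}=1$, and $E=\sup_{\norm{x}_2\le1}\norm{x}_2=1$. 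The proposition claims $\gamma_2(X,d)\lesssim\sqrt{\log n}$, and this is correct: $d$ is the $\ell_2$ distance between images of $x\mapsto(x_i^2)_i$, whose image is the solid simplex $\{y\ge 0:\sum_i y_i\le 1\}$, and $\gamma_2(\cdot,\norm{\cdot}_2)$ of that set is $\Theta(\sqrt{\log n})$ (e.g.\ $\E\max(0,\max_i g_i)$). But $\tilde\rho(x,x')=\norm{\,\abs{x}-\abs{x'}\,}_2$ has image $B_2^n\cap\R_{\ge0}^n$, for which $\gamma_2(\cdot,\norm{\cdot}_2)=\E\,\norm{g^+}_2=\Theta(\sqrt{n})$. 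Your Step~1 thus throws away a $\sqrt{n/\log n}$ factor before the chaining even starts; the ``spurious $\sqrt m$'' you worried about in Step~2 has already entered through the $\ell_2$ aggregation in $\tilde\rho$.

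The fix is to apply the factorization in the other direction: by H\"{o}lder,
$d(x,x')^2 \le \bigl(\sup_i \abs{\phi_i(Ax)-\phi_i(Ax')}\bigr)^2 \cdot \sum_i(\phi_i(Ax)+\phi_i(Ax'))^2 \le (2E)^2\,\rho(x,x')^2$,
where $\rho(x,x') := \sup_i \abs{\phi_i(Ax)-\phi_i(Ax')}$ is a $\sup$ over $i$, not an $\ell_2$-sum. This $\rho$ satisfies $\rho(x,x')\le\alpha\norm{A(x-x')}_\infty$ and has low metric complexity (a max over $\le M$ coordinate projections, each a further max over $\le r$ indices), which is what the chaining can actually exploit to produce the product $\sqrt{\log(M+n)\log r}$. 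Your two-scale intuition in Step~2 is then directed at $\rho$, not $\tilde\rho$; as written, Step~2 is both vague and aimed at a false target.
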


In order to apply \autoref{prop:gamma-bound}, we define $A: \R^V \rightarrow \R^{V \times V}$ as
\[
  (Ax)_{uv} :=
  \left\langle x, \frac{(L_G^{\eta})^{-1/2} (\chi_u - \chi_v)}{\norm{(L_G^{\eta})^{-1/2} (\chi_u - \chi_v)}_2} \right\rangle
\]
and
\[
  \phi_i(z) := \sqrt{\frac{w_i \tilde{\zeta}_i}{p_i}} 
  \max_{u, v \in e_i} \left( \norm{(L_{G}^{\eta})^{-1/2}(\chi_u - \chi_v)}_2 \cdot|z_{uv}| \right).
\]
Then,
\[
  \phi_i(Ax)^2 = \frac{w_i \tilde{\zeta}_i}{p_i} \cdot
  \max_{u, v \in e_i} \left\langle x, (L_G^{\eta})^{-1/2} (\chi_u - \chi_v) \right\rangle^2
  =
  \frac{w_i \tilde{\zeta}_i}{p_i} \cdot
  Q_{e_i}((L_G^{\eta})^{-1/2} x),
\]
so our metric does take the form in the proposition.
Using the fact that $p_i = c r_i w_i$ when $\tilde{\zeta}_i \neq 0$, and that
\[
  r_i := \max_{u', v' \in e_i} \norm{(L_{G_i}^{\eta})^{-1/2}(\chi_{u'} - \chi_{v'})}_2^2
  \ge \norm{(L_{G}^{\eta})^{-1/2}(\chi_u - \chi_v)}_2^2,
\]
we can see that $|\phi_i(z) - \phi_i(z')| \le 1/\sqrt{c} \cdot \norm{z - z'}_{\infty}$. Next, since each row of $Ax$ is formed by taking the inner product of $x$ with a unit vector, we have that $\norm{A}_{2 \rightarrow \infty} \le 1$.
Since $M = |V \times V| = n^2$, \autoref{prop:gamma-bound} then gives
\[
  \gamma_2(X, d) \lesssim \sqrt{\frac{\log n \log r}{c}} \cdot \sup_{x \in X} \left(
    \sum_{i=1}^m \frac{w_i \tilde{\zeta}_i}{p_i} Q_{e_i}((L_{G}^{\eta})^{-1/2} x)
  \right)^{1/2}
  \leq
  \sqrt{\frac{\log n \log r}{c}} \cdot \sup_{z: Q_H(z) \le 1} Q_{\tilde{H}}(z)^{1/2}.
\]

\subsection{Bounding Diameter}
Next, we upper bound the diameter $\Delta(X, d)$, which amounts to upper bounding $2 \sup_{x \in X} d(x, \vec{0})$.
Indeed,
\begin{eqnarray*}
    \sup_{x \in X} d(x, \vec{0})
    &=&
    \sup_{x \in X}
    \sqrt{
     \sum_{i=1}^m \left( \frac{w_i}{p_i}\right)^2 \cdot \tilde{\zeta_i}^2 \cdot (Q_{e_i} ((L_G^{\eta})^{-1/2}x))^2
    }
    \\
    &\le&
    \sup_{x \in X} \sqrt{
    \left(
     \sup_{i \in [m]} \frac{w_i \tilde{\zeta}_i}{p_i} \cdot Q_{e_i} ((L_G^{\eta})^{-1/2}x)
    \right)
     \cdot
    \left(
     \sum_{i=1}^m \frac{w_i \tilde{\zeta}_i}{p_i} Q_{e_i} ((L_G^{\eta})^{-1/2}x)
    \right)
    }
    \\
    &\le&
    \sqrt{
      \left( \sup_{x \in X} \sup_{i \in [m]} \frac{w_i \tilde{\zeta}_i}{p_i} \cdot
      \sup_{u, v \in e_i} \langle x, (L_G^{\eta})^{-1/2} (\chi_u - \chi_v) \rangle^2
      \right) 
      \cdot
      \sup_{z: Q_H(z) \le 1} Q_{\tilde{H}}(z)
    }
    \\
    &\le&
    \frac{1}{\sqrt{c}} \cdot \sup_{z: Q_H(z) \le 1} Q_{\tilde{H}}(z)^{1/2},
\end{eqnarray*}
where we used the fact that $p_i = c w_i r_i$ if $\tilde{\zeta}_i \neq 0$ and
\[
  \langle x, (L_G^{\eta})^{-1/2} (\chi_u - \chi_v)\rangle^2
  \le
  \norm{x}_2^2 \cdot \norm{(L_G^{\eta})^{-1/2} (\chi_u - \chi_v)}_2^2
  \le
  \norm{(L_G^{\eta})^{-1/2} (\chi_u - \chi_v)}_2^2
  \le
  r_i.
\]
The second inequality is because $X \subseteq B_2^n$ and the third inequality is because $r_i \ge \norm{(L_{G_i}^{\eta})^{-1/2}(\chi_u - \chi_v)}_2^2 \ge \norm{(L_{G}^{\eta})^{-1/2}(\chi_u - \chi_v)}_2^2$ for all $i \in [m]$ and $u, v \in e_i$.

We thus conclude that $\Delta(X, d) \le O(1/\sqrt{c})$.

\subsection{Conclusion}
For each fixed $\tilde{H}$, apply \autoref{lem:chaining-whp} with the previously obtained upper bounds on $\gamma_2(X, d)$ and $\Delta(X, d)$, noting that
\[
\sup_{z: Q_H(z) \le 1} Q_{\tilde{H}}(z)^{1/2}
\le
\left( 1 + \sup_{z: Q_H(z) \le 1} |Q_H(z) - Q_{\tilde{H}}(z)| \right)^{1/2}
= (1 + Z)^{1/2}.
\]
Now, apply the outer expectation over $\tilde{H}$, and \autoref{lem:exponential-mgf-bound} follows.
\end{document}